\newtheorem{theorem}{Theorem}
\newtheorem{lemma}{Lemma}
\newtheorem{conjecture}{Conjecture}
\theoremstyle{definition}
\newtheorem{definition}{Definition}
\newtheorem{remark}{Remark}
\title[Eigenvalue bounds on the pseudocodeword weight of expander codes]
      {Eigenvalue bounds on the pseudocodeword weight of expander codes}
\author[Christine A. Kelley and Deepak Sridhara]{}
\subjclass{Primary: 58F15, 58F17; Secondary: 53C35}
 \keywords{Expander graphs, LDPC codes, pseudocodewords,
  pseudocodeword weight, iterative decoding}
\thanks{The first author is with the Department of Mathematics at The Ohio State University. She was
previously with the Fields Institute, Toronto, Canada. The second
author is with Seagate Technology, Pittsburgh, USA. He was
previously with the Institut f\"ur Mathematik, Universit\"at
Z\"urich, Switzerland. This work was supported in part by the Swiss
National Science Foundation under Grant No. 113251.}
\begin{document}
\maketitle

\centerline{\scshape Christine A. Kelley}

\medskip

{\footnotesize
  \centerline{Department of Mathematics }
   \centerline{The Ohio State University}
    \centerline{Columbus, OH 43210, USA}
}

\medskip

\centerline{\scshape Deepak Sridhara}

\medskip

{\footnotesize
  \centerline{Seagate Technology}
  \centerline{1251 Waterfront Place}
    \centerline{ Pittsburgh, PA 15222, USA}

} %

\bigskip

 \centerline{(Communicated by Marcus Greferath)}
 \medskip

\begin{abstract}
Four different ways of obtaining low-density parity-check codes from expander
graphs are considered. For each case, lower bounds on the minimum stopping set size and the
minimum pseudocodeword weight of expander (LDPC) codes are derived. These
bounds are compared with the known eigenvalue-based lower bounds on the
minimum distance of expander codes. Furthermore, Tanner's parity-oriented
eigenvalue lower bound on the minimum distance is generalized to yield a
new lower bound on the minimum pseudocodeword weight. These bounds are useful in predicting
the performance of LDPC codes under graph-based iterative decoding and linear programming decoding.
\end{abstract}

\section{Introduction}
\label{intro}
Expander graphs are of fundamental interest in mathematics and engineering and have several applications in
computer science, complexity theory, derandomization, designing communication networks, and coding theory \cite{li03r,al86}.
A family of highly expanding graphs known as {\rm  Ramanujan} graphs \cite{lu88a,ma82} was constructed
with excellent graph properties that surpassed the parameters predicted for random graphs.
The description of these graphs and their analysis rely on deep results from mathematics using tools from graph theory,
number theory, and representation theory of groups \cite{lu94b}. Other authors have investigated non-algebraic
approaches to designing expander graphs and one such construction takes an appropriately defined product of small component expander
graphs to construct a larger expander graph \cite{li03r,al01p,rvw02}. Moreover, expander graphs have a special appeal
from a geometric viewpoint. Isoperimetric problems in geometry have also been described by analogous problems in
graphs, and a close connection exists between the Cheeger constant, defined for Riemannian surfaces, and the expansion constant
in graphs. Expander graphs can be viewed as discrete analogues of
Riemannian manifolds.

In this paper, we focus on one prominent application of expander graphs -- namely, the design of low-density parity-check (LDPC) codes.
Low-density parity-check codes are a class of codes that can be represented on
sparse graphs and have been shown to achieve record breaking performances
with graph-based message-passing decoders. Graphs with good expansion properties are particularly suited
for the decoder in dispersing messages to all nodes in the graph as quickly as possible. Expander codes are families
of graph-based codes where the underlying graphs are expanders. That is,
every element
of the family is an expander and gives rise to an expander code. The codes are
obtained by imposing code-constraints on the vertices (and possibly, edges)
of the underlying expander graphs \cite{si96,la00p,ja03}. It has been
observed that
graphs with good expansion lead to LDPC codes with minimum distance\footnote{The minimum distance of a code is a fundamental parameter that determines its error-correction capability.} growing linearly with the block length. In fact, one method of
designing asymptotically good linear block codes is from expander graphs \cite{si96}.

The popularity of LDPC codes is that they can be decoded with linear time complexity using graph-based
message-passing decoders, thereby allowing for the use of large block length codes in several practical
applications. In contrast, maximum-likelihood (ML) decoding a generic error-correcting code is known to be NP hard.
A parameter that dominates the performance of a graph-based message passing decoder is the minimum
pseudocodeword weight, in contrast to the minimum distance for an optimal (or, ML) decoder.
The minimum pseudocodeword weight of the graph has been found to be a reasonable
predictor of the performance of a finite-length LDPC code under graph-based message-passing decoding and also linear
programming decoding \cite{ko03p,ke05u,feldman,fe03t,fe05}.
In this paper, we consider four different ways of obtaining LDPC codes from expander graphs. For each case, we first
present the known lower bounds on the minimum distance of expander codes based
on the expansion properties of the underlying expander graph. We then extend the results to lower bound
the minimum stopping set size, which is essentially the minimum pseudocodeword weight on the binary erasure
channel (BEC), and finally, we lower bound the minimum pseudocodeword weight on the binary symmetric channel (BSC).
We also examine a new parity-oriented lower bound on the minimum pseudocodeword weight over the additive white
Gaussian noise (AWGN) channel, thereby generalizing the result of Tanner
\cite{ta01} for the minimum distance.

\section{Preliminaries}
\label{sec:1}
We introduce some preliminary definitions and notation that we will use in this paper.

\begin{definition}{\rm
A graph $G=(X,Y;E)$ is {\em $(c,d)$-regular bipartite} if the set of
vertices in $G$ can be partitioned
into two disjoint sets $X$ and $Y$ such that all vertices in
$X$ have degree $c$ and all vertices in $Y$ have degree $d$ and each edge  $e\in E$
of $G$ is incident with one vertex in $X$ and one vertex in $Y$, i.e.,
$e=(x,y), x\in X, y\in Y$.}
\end{definition}

We will refer to the vertices of degree $c$ as the {\em left} vertices,
and to vertices of degree $d$ as the {\em right} vertices.


The adjacency matrix of a $d$-regular connected graph has $d$ as its
largest eigenvalue. Informally, a graph is a good expander if the gap
between the first and the second largest eigenvalues of the adjacency matrix is as big as
possible. More precise definitions will be given later in the
paper as needed. Note that for a $(c,d)$-regular bipartite graph,
the largest eigenvalue is $\sqrt{cd}$.


\begin{definition}
{\rm
A {\em simple} LDPC code is defined by a bipartite graph $G$ (also called, a Tanner graph) whose
left vertices are called {\em variable}
(or, {\em codebit}) nodes and whose right vertices are called {\em check}
(or, {\em constraint}) nodes and the set
of codewords are all binary assignments to the variable nodes such that
at each check node, the modulo-two sum
of the variable node assignments connected to the check node is zero,
i.e., the parity-check constraint involving the
neighboring variable nodes is satisfied.
}
\label{LDPC_defn}
\end{definition}
Note that equivalently, the LDPC code can be described by a (binary)
incidence matrix (or, parity-check matrix) wherein the rows of the matrix
correspond to the constraint nodes of $G$ and the columns correspond to
variable nodes and there is a one in the matrix at a row-column entry
whenever there is an edge between the corresponding constraint node and
variable node in $G$.

The above definition can be generalized by introducing more complex
constraints instead of simple parity-check constraints at each constraint
node, and the resulting LDPC code will be called a {\em generalized} LDPC
code.

To analyze the performance of graph-based message passing decoding, certain combinatorial objects of the LDPC constraint
graph have been identified that control the performance of the decoder. When transmitting over a binary erasure channel (BEC), it has been shown that stopping sets in the Tanner graph control the performance of the message-passing decoder.

\begin{definition}{\rm \cite{di02}
For a simple LDPC code, a {\em stopping set} is a subset set $S$ of the variable nodes such that every constraint node that is a neighbor of some node $s\in S$
is connected to $S$ at least twice.}
\end{definition}
\vspace{0.1in}

\begin{figure}
\centering{\resizebox{4in}{1.15in}{\includegraphics{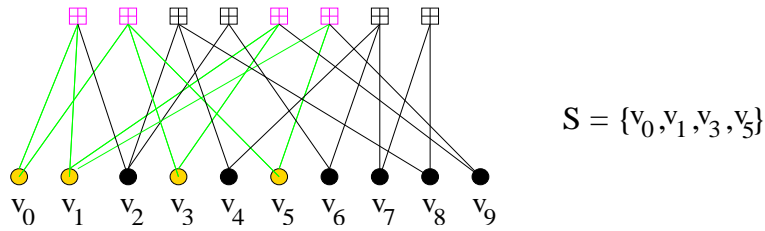}}}
\caption{A stopping set $S = \{v_0,v_1,v_3,v_5\}$ in $G$.}
\label{sset_fig1}
\end{figure}

The size of a stopping set $S$ is equal to the number of elements in $S$.
A stopping set is said to be {\em minimal} if there is no smaller sized
stopping set contained within it. The smallest minimal stopping set is
 called a {\em minimum} stopping set, and its size is denoted by
$s_{\min}$. Note that a minimum stopping set is not necessarily unique.
Figure~\ref{sset_fig1} shows a stopping set in the graph. Observe that
$\{v_4,v_7,v_8\}$ and $ \{v_3,v_5,v_9\}$ are two minimum stopping sets of
size $s_{\min} = 3$, whereas $\{v_0,v_1,v_3,v_5\}$ is a minimal stopping
set of size 4.

On the BEC, if all of the nodes of a stopping set are
erased, then the graph-based iterative
decoder will not be able to recover the erased symbols associated
with the nodes of the
stopping set \cite{di02}. Therefore, it is advantageous to design
LDPC codes with large minimum stopping set size $s_{\min}$.

For other channels, it has been recently observed that so called {\em pseudocodewords} dominate the performance of the
iterative decoder \cite{ko03p,ke05u}. (In fact, pseudocodewords are a generalization of stopping sets for other channels.)
We will now introduce the formal definition of {\em lift-realizable}
pseudocodewords of an LDPC constraint graph $G$ \cite{ke05u}.
However, we will also need to introduce the definition of a graph lift.
A degree $\ell$ cover (or, lift) $\hat{G}$ of $G$ is defined in the
following manner:

\begin{definition}{\rm A finite degree {\em $\ell$ cover} of $G=(V,W;E)$
is
a bipartite graph $\hat{G}$ where for each vertex $x_i\in V \cup W$,
there is a {\em
cloud}
$\hat{X}_i=\{\hat{x}_{i_1},\hat{x}_{i_2},\dots,\hat{x}_{i_{\ell}}\}$
of vertices in $\hat{G}$, with $deg(\hat{x}_{i_j})=deg(x_i)$ for all
$1\le j \le \ell$,
and for every $(x_i, x_j)\in E$, there are $\ell$ edges from $\hat{X}_i$
to $\hat{X}_j$ in $\hat{G}$ connected in a $1-1$ manner.
}\label{graphcover}
\end{definition}
Figure~\ref{pscw_fig} shows a base graph $G$ and a degree four cover of
$G$.

\begin{definition}{\rm Suppose that ${\bf \hat{\bf c}} =
({\hat{c}}_{1,1},{\hat{c}}_{1,2},\dots,{\hat{c}}_{1,\ell},{\hat{c}}_{2,1},\dots,{\hat{c}}_{2,\ell},\dots)$
is a codeword in the Tanner graph $\hat{G}$ representing a degree
$\ell$ cover of $G$. A {\em pseudocodeword ${\bf p}$ of $G$} is a
vector $(p_1,p_2,\dots,p_n)$ obtained by reducing a codeword ${\bf
\hat{\bf c}}$, of the code in the cover graph $\hat{G}$, in the
following way:
\begin{center}{ ${\bf
\hat{c}}=({\hat{c}}_{1,1},\dots,{\hat{c}}_{1,\ell},{\hat{c}}_{2,1},\dots,{\hat{c}}_{2,\ell},\dots)
\rightarrow  (\frac{{\hat{c}}_{1,1}+{\hat{c}}_{1,2}+
\dots+{\hat{c}}_{1,\ell}}{\ell},\frac{{\hat{c}}_{2,1}+{\hat{c}}_{2,2}+
\dots+{\hat{c}}_{2,\ell}}{\ell},\dots) = (p_1,p_2,\dots,p_n)$=${\bf p}$, }
\end{center}
\vspace{-0in}$\mbox{ where }p_i = \frac{{\hat{c}}_{i,1}+{\hat{c}}_{i,2}+
\dots+{\hat{c}}_{i,\ell}}{\ell}. $
}
\label{pscw_defn}
\end{definition}

The vector $\hat{c}$ on the left hand side of Figure ~\ref{pscw_fig} corresponds to a codeword in the degree four cover that is also
a codeword in the base graph $G$, whereas the vector on the right hand side corresponds to a codeword in the degree four cover that
does not correspond to a codeword in the base graph.

From the above definition, it is easy to show that for a simple LDPC constraint graph $G$, a pseudocodeword ${\bf p}=(p_1,p_2,\dots,p_n)$
is a vector that satisfies the following set of inequalities:
\begin{equation}
0\le p_i\le 1, \ \ \mbox{ for } i = 1,2,\dots,n.
\end{equation}
and, if variable nodes $i_1$, $i_2$, $\dots$, $i_d$ participate
in a check node of degree $d$, then the pseudocodeword components satisfy
\begin{equation}
p_{i_j}\le \sum_{k=1,2,..d, k\ne j} p_{i_k}, \mbox{ for } j=1,2,..,d.
\label{eqn_basic}
\end{equation}

Extending the above for generalized LDPC codes, it can similarly be shown that on a generalized LDPC constraint graph $G$,
a pseudocodeword ${\bf p}=(p_1,p_2,\dots,p_n)$ is a vector that satisfies the following set of inequalities:
\begin{equation}
0\le p_i\le 1, \ \ \mbox{ for } i = 1,2,\dots,n. \label{eqn_zero}
\end{equation}
and, if variable nodes $i_1$, $i_2$, $\dots$, $i_d$ participate
in a constraint node of degree $d$ and that constraint node represents a subcode $[d,rd,\epsilon d]$,  then the pseudocodeword components satisfy
\begin{equation}
(d\epsilon-1)p_{i_j}\le \sum_{k=1,2,..d, k\ne j} p_{i_k}, \mbox{ for } j=1,2,..,d.
\label{eqn_gen}
\end{equation}

\begin{remark}
Note that Equation \ref{eqn_gen} implies that the  pseudocodeword
components of the generalized LDPC constraint graph $G$ also satisfy
the following set of inequalities at a degree $d$ constraint node
representing a $[d,rd,\epsilon d]$ subcode
\begin{equation}
\sum_{\mbox {any }\lfloor \frac{d\epsilon}{2} \rfloor j\mbox{'s }}
p_{i_j}\le \sum_{\mbox{ remaining terms }} p_{i_k}, \mbox{ and }
\label{eqn_gen2}
\end{equation}
\begin{equation}
3\Big{(}\sum_{\mbox {any }\lfloor \frac{d\epsilon}{4} \rfloor
j\mbox{'s }} p_{i_j}\Big{)}\le \sum_{\mbox{ remaining terms }}
p_{i_k} \label{eqn_gen3}
\end{equation}
\end{remark}

\begin{figure}
\centering{\resizebox{3.1in}{1.3in}{\includegraphics{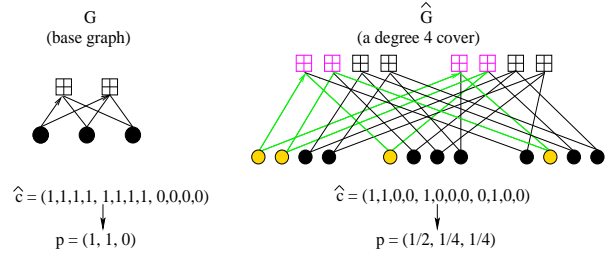}}}
\caption{A pseudocodeword in the base graph (or a valid codeword in a
lift).}
\label{pscw_fig}
\end{figure}

The set of lift-realizable pseudocodewords can also be described
elegantly by means of a polytope, called the fundamental polytope
\cite{ko03p}. In particular, lift-realizable pseudocodewords are
dense in the fundamental polytope. For simple LDPC codes, equations
(1) and (2) are necessary and sufficient conditions for a
pseudocodeword to lie in the fundamental polytope. However, for
generalized LDPC codes, equations (\ref{eqn_zero}), (\ref{eqn_gen}),
(\ref{eqn_gen2}), and (\ref{eqn_gen3}) are necessary but, in
general, not sufficient conditions for a pseudocodeword to lie in
the fundamental polytope.

It was shown in \cite{ko03p,ke05u} that a stopping set in a simple LDPC constraint graph
is the support of a pseudocodeword as defined above. Thus, generalizing the definition
of stopping sets to generalized LDPC code, we have:

\begin{definition}{\rm
A {\em stopping set} in a generalized LDPC constraint graph $G$ is the support of a pseudocodeword {\bf p} of $G$. }
\label{sset_gen}
\end{definition}

Note that this definition of stopping sets for a generalized LDPC code implies the same operational meaning as stopping sets for simple LDPC codes, i.e., the iterative decoder gets stuck if a generalized LDPC code is used for transmission over a BEC and the set of erased positions at the receiver contains a stopping set (as a subset) as defined above.

In Sections 3, 4, 5 and 6, we will consider pseudocodewords and their behavior on the binary symmetric channel (BSC),
and in Section 7, we will consider pseudocodewords on the additive white Gaussian noise (AWGN) channel.
The weight of a pseudocodeword ${\bf p}$ on the BSC is defined as follows \cite{fo01in}.
\begin{definition}{\rm
Let $e$ be the smallest number  such that the sum of the $e$ largest
components of ${\bf p}$ is at least
the sum of the remaining components of ${\bf p}$. Then, the BSC {\em pseudocodeword weight}
of ${\bf p}$ is
\[w_{BSC}({\bf p})=\left\{\begin{array}{cc}
2e,& \mbox{if } \sum_{e \mbox{ largest}} p_i = \sum_{\mbox{remaining
}}p_i\\
2e-1,& \mbox{if } \sum_{e \mbox{ largest}} p_i > \sum_{\mbox{remaining
}}p_i
\end{array} \right .\]}
\label{bsc_pscw_wt}
\end{definition}

\begin{definition}{\rm
The {\em minimum BSC pseudocodeword weight} of an LDPC constraint graph $G$
on the BSC is the minimum weight among all pseudocodewords obtainable from all finite-degree
lifts of $G$. This parameter is denoted by $w^{BSC}_{\min}$.}
\end{definition}
\vspace{0.1in}

\section{Case A}

\vspace{0.1in}
\begin{definition}{\rm Let $0< \alpha <1$ and $0< \delta < 1$.
A $(c,d)$-regular bipartite graph $G$ with $n$ degree $c$ nodes on the left and $m$
degree $d$ nodes on the right is an {\em $(\alpha n,\delta c)$ expander} if for every
subset $U$ of degree $c$ nodes such that $|U|< \alpha n$, the size of the set of
neighbors of $U$, $|\Gamma(U)|$ is at least $\delta c |U|$. }
\label{expander_a}
\end{definition}

Let a $(c,d)$-regular bipartite graph $G$ with $n$ left
vertices  and $m$ right vertices be an $(\alpha n,\delta c)$ expander. An LDPC code is
obtained from $G$ by interpreting the degree $c$ vertices in $G$ as variable nodes
and the degree $d$ vertices as simple parity-check nodes. (See Figure~\ref{caseA}.)

\begin{figure}
\centering{\resizebox{4in}{2.2in}{\includegraphics{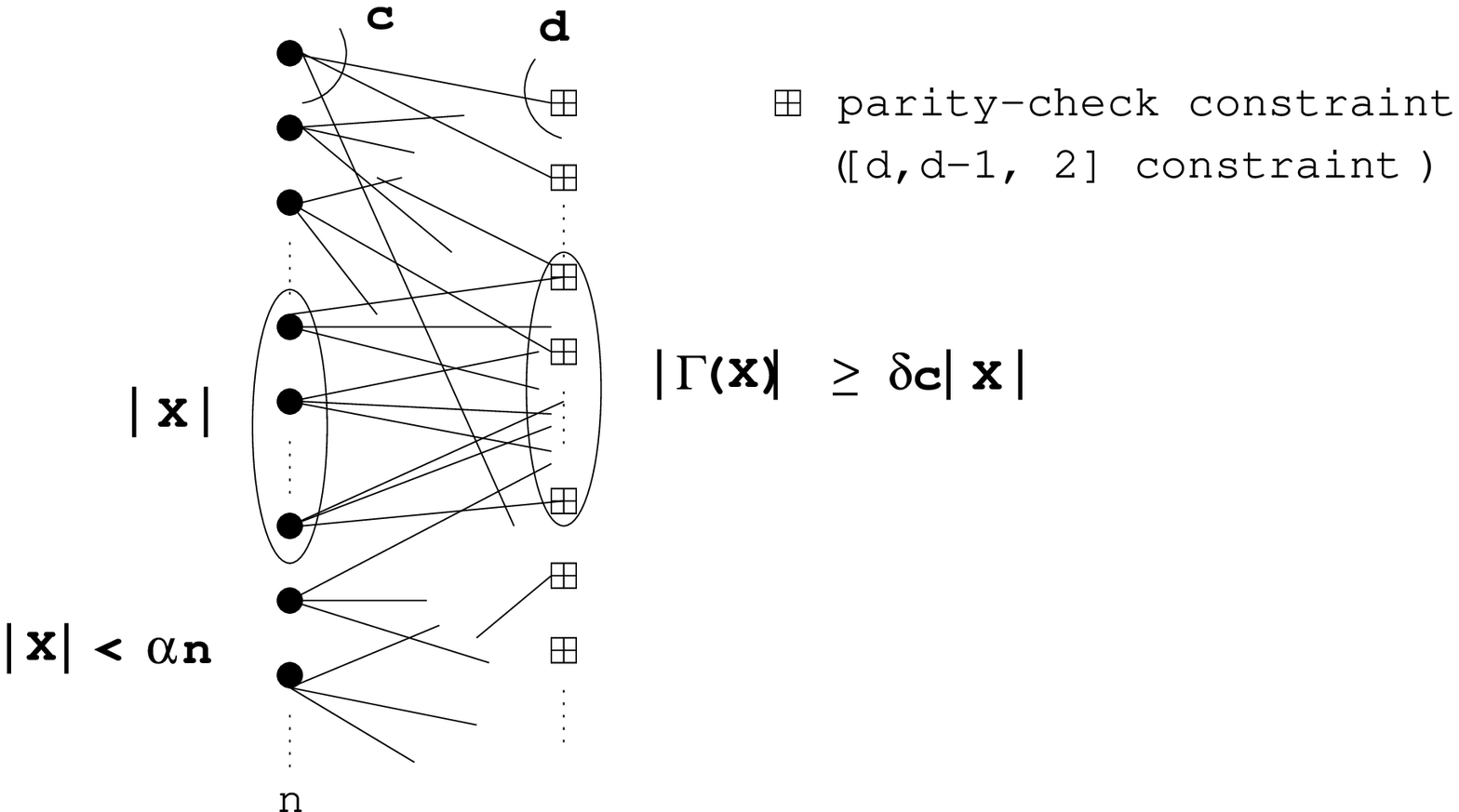}}}
\centerline{\scriptsize Degree $c$ vertices: variable nodes, degree $d$ vertices: simple parity-check constraints.}
\caption{Expander code: Case A.}
\label{caseA}
\end{figure}

\subsection{Minimum distance}
\vspace{0.1in}
\begin{lemma}\cite{si96} If $\delta > 1/2$, the LDPC code obtained from
the $(\alpha n,\delta c)$ expander graph
$G$ as above has minimum distance $d_{\min}\ge \alpha n$.
\label{dmin_A}
\end{lemma}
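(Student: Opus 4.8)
The plan is to argue by contradiction using an edge-counting (double-counting) argument that plays the parity-check constraints against the expansion hypothesis. The minimum distance is the smallest weight of a nonzero codeword, so it suffices to show that no nonzero codeword has weight below $\alpha n$.

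First I would suppose, for contradiction, that there is a nonzero codeword $\mathbf{c}$ of weight $w = |\mathrm{supp}(\mathbf{c})| < \alpha n$, and set $U = \mathrm{supp}(\mathbf{c})$ to be the set of variable nodes assigned the value $1$. Since every variable (left) node has degree $c$, the number of edges emanating from $U$ is exactly $c|U| = cw$. This is the total edge budget that must be distributed among the check nodes in $\Gamma(U)$.

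Next I would invoke the codeword property. Because $\mathbf{c}$ is a codeword, at every check node the modulo-two sum of its neighboring variable-node assignments is zero; hence every check node in $\Gamma(U)$ is incident to an \emph{even}, and therefore at least \emph{two}, edges coming from $U$. Counting the edges from $U$ by their check-node endpoints then gives $cw \geq 2|\Gamma(U)|$. Finally I would combine this with the expansion bound $|\Gamma(U)| \geq \delta c |U| = \delta c w$, valid precisely because $|U| = w < \alpha n$ so that Definition~\ref{expander_a} applies. Chaining the two inequalities yields $cw \geq 2\delta c w$, i.e. $\delta \leq 1/2$, contradicting the hypothesis $\delta > 1/2$. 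Thus no nonzero codeword can have weight less than $\alpha n$, and $d_{\min} \geq \alpha n$.

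The argument is short, and the only step that genuinely carries the weight is the observation that being a codeword forces \emph{even}, hence at least double, coverage of each neighboring check: it is this factor of $2$ that halves the available edge budget and makes it collide with expansion whenever $\delta > 1/2$. I expect no serious obstacle beyond keeping the strict inequality $|U| < \alpha n$ in force so that the expander definition can be applied verbatim, and noting that the counting bound $cw \geq 2|\Gamma(U)|$ holds regardless of how the remaining even-parity patterns are arranged at the checks.
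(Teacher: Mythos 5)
Your proof is correct and is essentially the paper's argument: the paper proves this lemma (via its proof of Lemma~\ref{smin_A}, which it notes is a direct extension) by the same counting of the $c|U|$ edges leaving the support against the expansion bound $|\Gamma(U)|\ge \delta c|U|$, concluding that some check node would otherwise be singly connected to the support, violating even parity. Your rearrangement into the chain $cw\ge 2|\Gamma(U)|\ge 2\delta cw$ is just a repackaging of that averaging step, and all hypotheses (in particular $|U|<\alpha n$ so the expander definition applies) are used correctly.
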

\vspace{0.1in}

\subsection{Minimum stopping set size}

\begin{lemma} If $\delta > 1/2$, the LDPC code obtained from the $(\alpha n,\delta c)$ expander graph
$G$ as above has a minimum stopping set size  $s_{\min}\ge \alpha n$.
\label{smin_A}
\end{lemma}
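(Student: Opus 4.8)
The plan is to mirror the minimum-distance argument of Lemma~\ref{dmin_A}, replacing ``support of a codeword'' by ``stopping set'' and exploiting the fact that the very property defining a stopping set is precisely the edge-counting inequality one needs. I would argue by contradiction: suppose there is a nonempty stopping set $S$ among the degree $c$ (variable) nodes with $|S| < \alpha n$, and show that this forces $\delta \le 1/2$.

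First I would count the edges incident with $S$ in two ways. Since every variable node has degree $c$, the number of edges leaving $S$ is exactly $c|S|$. On the other hand, by the definition of a stopping set every check node in the neighborhood $\Gamma(S)$ is joined to $S$ by at least two edges, so the number of edges leaving $S$ is at least $2|\Gamma(S)|$. Comparing the two counts gives $c|S| \ge 2|\Gamma(S)|$. Next I would invoke the expansion hypothesis: because $|S| < \alpha n$, the $(\alpha n, \delta c)$-expansion property yields $|\Gamma(S)| \ge \delta c\,|S|$. Substituting into the edge bound gives $c|S| \ge 2\delta c\,|S|$, and dividing through by $c|S| > 0$ produces $1 \ge 2\delta$, i.e.\ $\delta \le 1/2$, contradicting the hypothesis $\delta > 1/2$. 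Hence no nonempty stopping set of size smaller than $\alpha n$ exists, which is exactly the claim $s_{\min} \ge \alpha n$.

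There is essentially no hard step here; the argument is a single double count played against the expansion bound. The only point that requires care is the observation that the combinatorial condition defining a stopping set---each neighboring check touched at least twice---is precisely the inequality $c|S| \ge 2|\Gamma(S)|$ that drove the distance bound, so the proof transfers verbatim. This also explains transparently why $s_{\min}$ and $d_{\min}$ obey the same lower bound: the support of any nonzero codeword is itself a stopping set, so the stopping-set bound subsumes Lemma~\ref{dmin_A}. I would also note explicitly that it is the strict inequality $\delta > 1/2$, rather than $\delta \ge 1/2$, that makes the final contradiction go through, and that the hypothesis $|S| < \alpha n$ is exactly the regime in which the expansion guarantee is available.
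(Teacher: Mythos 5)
Your proof is correct and is essentially the same argument as the paper's: both rest on the double count that $S$ emits $c|S|$ edges while expansion forces $|\Gamma(S)|\ge \delta c|S|$, so with $\delta>1/2$ some neighbor of $S$ is hit only once (equivalently, your rearrangement $1\ge 2\delta$). The only cosmetic difference is where the contradiction is localized — the paper contradicts the stopping-set property, you contradict the hypothesis on $\delta$ — which is immaterial.
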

\vspace{0.1in}

\begin{proof}

Suppose the contrary that there exists a stopping set $S$ of size smaller than $\alpha n$.  Then by the expansion property
of the graph, the size of the set of
neighbors of $S$ is $|\Gamma(S)|\ge \delta c |S|$. The average number of times a vertex in $\Gamma(S)$ is connected to
the set $S$ is  $\frac{c|S|}{|\Gamma(S)|}\le \frac{c|S|}{\delta c|S|} < 2$. This means that there is at least one vertex in
$\Gamma(S)$ that is connected to the set $S$ only once, contradicting the fact that $S$ is a stopping set.
\end{proof}

Note that the above proof is just an extension of the proof of Lemma \ref{dmin_A} for the lower bound on the minimum distance $d_{\min}$ since it uses
the fact that every check node neighbor of a stopping set is connected to the set at least twice, which is a similar requirement
for a codeword in the proof of Lemma \ref{dmin_A}.

\subsection{Minimum pseudocodeword weight}

\begin{theorem} If $\delta > 2/3+1/3c$ such that $\delta c$ is an
integer, the LDPC code obtained from the $(\alpha n,\delta c)$ expander
graph $G$ as above has a pseudocodeword weight
\[w^{BSC}_{\min}> \frac{2(\alpha n-1)(3\delta-2)}{(2\delta -1)}-1.\]
\label{case_a_thm}
\end{theorem}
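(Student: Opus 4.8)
The plan is to bound the integer $e$ appearing in Definition \ref{bsc_pscw_wt}: since $w_{BSC}(\mathbf{p})\ge 2e-1$ for every pseudocodeword, it suffices to prove $e>\frac{(\alpha n-1)(3\delta-2)}{2\delta-1}$ for every nonzero $\mathbf{p}$. First I would observe that the weight is invariant under positive scaling of $\mathbf{p}$, so I may discard the constraints $p_i\le 1$ and work only with $p_i\ge 0$ and the homogeneous check inequalities (\ref{eqn_basic}). Exactly as in the proof of Lemma \ref{smin_A}, the inequalities (\ref{eqn_basic}) force every check adjacent to the support $S=\{i:p_i>0\}$ to meet $S$ at least twice, so $S$ is a stopping set and $|S|\ge\alpha n$. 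Sorting $p_1\ge p_2\ge\cdots$, set $U=\{1,\dots,e\}$, $A=\sum_{i\in U}p_i$, and $B=\sum_{i\notin U}p_i$, with $T=A+B$. By the minimality in the choice of $e$ one has both $A\ge B$ and $A-B<2p_e$, so $A$ and $B$ are essentially balanced at $T/2$.

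Because $\delta<1$ gives $3\delta-2<2\delta-1$, one has $\frac{3\delta-2}{2\delta-1}<1$, so the target is immediate when $e\ge\alpha n$; hence I may assume $e<\alpha n$ and apply the expansion hypothesis to $U$. Letting $a_j$ be the number of checks joined to $U$ in exactly $j$ edges, the relations $\sum_j a_j=|\Gamma(U)|\ge\delta c e$ and $\sum_j j\,a_j=ce$ yield (using $\delta>2/3$, i.e. average connectivity $1/\delta<3/2$) at least $(2\delta-1)ce$ unique-neighbor checks, namely those with $j=1$, and at most $(1-\delta)ce$ excess edges $\sum_{j\ge 2}(j-1)a_j=ce-|\Gamma(U)|$. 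At each unique-neighbor check the single coordinate $i\in U$ is the largest incident coordinate (its value is $\ge p_e\ge p_\ell$ for every $\bar U$-neighbor $\ell$), so (\ref{eqn_basic}) reads $p_i\le\sum_\ell p_\ell$ over the remaining, necessarily $\bar U$, neighbors. Summing over all unique-neighbor checks and charging the right-hand side to the light coordinates (each of degree $c$) moves the heavy mass carried on unique checks onto $B$; comparing this with the total heavy mass $cA=\sum_{f\in\Gamma(U)}(\text{$U$-mass of }f)$ and capping the $U$-mass on multiply-covered checks via the excess-edge bound produces a linear inequality between $A$ and $B$ whose coefficients combine $(2\delta-1)$ and $(1-\delta)$. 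Feeding this into the balance $A\ge B$, $A-B<2p_e$ and solving for $e$ is designed to give $e>\frac{(\alpha n-1)(3\delta-2)}{2\delta-1}$, whence $w_{BSC}(\mathbf{p})\ge 2e-1$ exceeds the claimed bound; the hypotheses that $\delta c$ be an integer and $\delta>2/3+1/(3c)$ are what keep the count $(2\delta-1)ce$ a positive integer and the coefficient $3\delta-2$ strictly positive.

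The step I expect to be the main obstacle is making the charging at unique-neighbor checks tight enough to produce exactly the constant $\frac{3\delta-2}{2\delta-1}$. A naive charging loses a factor, because a single light coordinate of degree $c$ can absorb the charge of up to $c$ distinct unique-neighbor checks, and because $A$ may be concentrated on a few top coordinates whose values are not controlled by $p_e$; a one-coordinate-at-a-time estimate degrades into a harmonic (logarithmic in $e$) bound rather than the linear one sought. The resolution I would pursue is to account globally rather than per coordinate: split $\Gamma(U)$ into unique and multiply-covered checks, use $\sum_{j\ge 2}(j-1)a_j\le(1-\delta)ce$ to bound the $U$-mass sitting on multiply-covered checks, and substitute the resulting two-sided estimate into the balance relation. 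Getting these two contributions to collapse to the single fraction is the delicate part, and it is precisely here that the stronger threshold $\delta>2/3$ (versus the $\delta>1/2$ used for $s_{\min}$) is essential.
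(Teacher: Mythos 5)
Your setup (bounding $e$, reducing to $e<\alpha n$, the balance $A\ge B$, and the unique-neighbor count $a_1\ge(2\delta-1)ce$ from $\sum_j a_j\ge\delta ce$ and $\sum_j ja_j=ce$) is fine, but the core step --- turning the unique-neighbor check inequalities into a linear relation between $A$ and $B$ with the coefficient $\tfrac{1-\lambda}{2\delta-1}$ --- has a genuine gap that your proposed ``global accounting'' does not close. Summing $p_{i(f)}\le\sum_{\ell\in\bar U}p_\ell$ over unique-neighbor checks $f$ gives $\sum_{i\in U}u_ip_i\le cB$, where $u_i$ is the number of unique-neighbor checks whose unique $U$-vertex is $i$. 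The expansion hypothesis controls only the \emph{total} $\sum_iu_i\ge(2\delta-1)ce$, not its distribution: a heavy coordinate can have $u_i=0$ (all of its $c$ checks shared with other $U$-vertices), in which case $p_1$ never appears on the left at all, and since the list is sorted decreasingly the bottom $(2\delta-1)e$ coordinates of $U$ can carry an arbitrarily small fraction of $A$. The excess-edge bound $\sum_{j\ge2}(j-1)a_j\le(1-\delta)ce$ again only caps a global quantity and cannot prevent this concentration, so no inequality of the form $\mathrm{const}\cdot A\le B$ follows from unique neighbors alone.

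The paper's proof supplies exactly the missing per-vertex guarantee. It introduces $\dot U=\{v\notin U:|\Gamma(v)\cap\Gamma(U)|\ge(1-\lambda)c+1\}$ with $\lambda=2(1-\delta)+\tfrac1c$, shows $|\dot U|\le\beta|U|$ with $\beta=\tfrac{1-\delta}{3\delta-2}$ (this is where the threshold $e\le\tfrac{\alpha n-1}{1+\beta}$ and hence the constant $\tfrac{3\delta-2}{2\delta-1}$ enter), and then uses Hall's theorem on an auxiliary graph to build a $(\delta,\lambda)$-matching $M$: \emph{every} vertex of $U$ is matched along at least $\delta c$ of its edges to distinct checks, each check receiving at most one matched edge. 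Applying inequality (\ref{eqn_basic}) at each matched check with the matched $U$-coordinate on the left makes every $p_i$, $i\in U$, appear at least $\delta c$ times on the aggregated left-hand side --- regardless of how the mass of $A$ is distributed --- while $U$-, $\dot U$-, and $(V\setminus U')$-vertices appear on the right at most $(1-\delta)c$, $(1-\lambda)c$, and $(1-\lambda)c$ times respectively. That yields $\delta cA\le(1-\delta)cA+(1-\lambda)cB$, hence $A<B$, the desired contradiction. To repair your argument you would need to replace the unique-neighbor count by this matching (or an equivalent fractional assignment guaranteeing $\delta c$ left-hand-side appearances per $U$-vertex); note also that your preliminary observation that the support of $\mathbf p$ is a stopping set of size at least $\alpha n$ is true but plays no role in the bound.
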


\begin{proof}
The proof is by contradiction. Let ${\bf p} = (p_1,\ldots,p_n)$ be
a pseudocodeword in $G$. Without loss of generality, let $p_1\ge
p_2 \ge \ldots \ge p_n$. We will show that if $e$ is the smallest number such that $p_1+p_2+\dots+p_e\ge p_{e+1}+..+p_n$, then $e$ must be more than $\frac{(\alpha n-1)(3\delta-2)}{2\delta-1}$. We will assume a subset $U$ of size $e$ of variable nodes corresponding to the $e$ dominant components of ${\bf p}$ to have a size that is at most $\frac{(\alpha n-1)(3\delta-2)}{2\delta-1}$ and establish the necessary contradiction.

Let $V=\{ v_1,v_2,\dots,v_n\}$ be the set of variable nodes. Let $U = \{v_1,\ldots,v_e\}$ be a set of
$e$ variable nodes corresponding to the $e$
largest components of ${\bf p}$.  Let $\dot{U}=\{v_i \in V| v_i \notin U,
|\Gamma(v_i) \cap \Gamma(U)| \ge (1-\lambda)c + 1\}$, where $\Gamma(X)$ is the set of neighbors of the vertices in
$X$ and $\lambda = 2(1-\delta)+\frac{1}{c}$. Let $U' = U \cup
\dot{U}$. Note that since we assume $\delta c$ to be an integer, $\lambda c$ is also an integer.\\

We want to show that if $|U'| < \alpha n$, then we can find a
set $M$ of edges such that: (i) every node in $U$ is incident with
at least $\delta c$ edges in $M$, (ii) every node in $\dot{U}$ is
incident with at least $\lambda c$ edges in $M$, and (iii) every node in $\Gamma(U')$
is incident with at most one edge in $M$. (Such a set $M$ is
called a {\em $(\delta,\lambda)$-matching} in \cite{feldman}.) Suppose $e = |U| \le \frac{(\alpha n-1)}{(1+\beta)}$,
where $\beta = \frac{(1-\delta)}{(3\delta - 2)}$. Then by Lemma 6 in \cite{feldman}, $|\dot{U}| \le \beta|U|$.
This implies that $|U'| \le (1+\beta)|U| \le (\alpha n - 1)$. Since $G$ is an
$(\alpha n, \delta c)$-expander, this means $|\Gamma(U')| \ge \delta c|U'| = \delta c |U|+\delta c |\dot{U}|$.

We will prove the $(\delta,\lambda)$-matching property in $G$ by
constructing a new bipartite graph $\hat{G}$ as follows. Label the
edges connected to each vertex in $U\cup \dot{U}$ as $\{1,2,\dots,
c\}$. For each vertex $v$ in $U\cup \dot{U}$, create $\delta c$
vertices $v_1,v_2, \dots, v_{\delta c}$ in $\hat{G}$. For every
vertex $w$ in $\Gamma(U')$ in the graph $G$, form a vertex $w$ in
the graph $\hat{G}$. Let $\hat{U}$ correspond to the set of vertices
in $\hat{G}$ that correspond to the copies of vertices in $U\cup
\dot{U}$ and let $W$ be the set of vertices in $\hat{G}$ that
correspond to the vertices in $\Gamma(U')$ in $G$. For a vertex $v$
in $G$, connect the vertex $v_i\in X$, $i=1,2,\dots, \delta c$, to a
node $w\in W$ if and only if the $i^{th}$ edge of $v$ is connected
to $w\in \Gamma(U')$ in $G$. (Note that the $\delta c$ vertices in
$\hat{G}$ that correspond to a node in $U$ correspond to $\delta c$
edges incident on that node in $G$. Furthermore, since $G$ does not
contain multiple edges, each of those $\delta c$ edges are connected
to a distinct node in $\Gamma(U')$, which means that each of the
$\delta c$ copies in $\hat{G}$ corresponding to a node in $U$ are
connected to a distinct node in $W$.) Now, for any subset $X\subset
\hat{U}$, we will always have that $|\Gamma(X)|\ge |X|$ in
$\hat{G}$. This can be seen by the following argument: since the
graph $G$ is an $(\alpha n, \delta c)$ expander and $|U\cup
\dot{U}|< \alpha n$, therefore any subset $Y\subset U\cup \dot{U}$
has the property that $|\Gamma(Y)|\ge \delta c(|Y|)$ in $G$. Choose
$Y$ such that the set of vertices in $X$ in $\hat{G}$ correspond to
the set of vertices in $Y$ in the graph $G$. We have
$|\Gamma(X)|=|\Gamma(Y)|\ge \delta c |Y|\ge \delta c |X|/(\delta c)
= |X|$ since $\Gamma(X)=\Gamma(Y)$ by construction and $\delta
c|Y|\ge |X|$ by construction. Thus, for any subset $X\subset
\hat{U}$ in the graph $\hat{G}$, we have $|\Gamma(X)|\ge |X|$.
Therefore, by Hall's (Marriage) Theorem, there is a matching of all
nodes in $\hat{U}$ (which corresponds to the $\delta c $ copies of
vertices in $U \cup \dot{U}$) to the set in $W$ (or $\Gamma(U')$).
Since $\lambda c < \delta c$ by the choice of $\lambda$, this means that the matching in $\hat{G}$ corresponds to a $(\delta,\lambda)$-matching for the set $U'$ in the graph $G$.\\

Consider all of the check nodes in $\Gamma(U)$ that are incident with
edges from $M$ that are also incident with the vertices in $U$. Let us
call this set of check nodes $T$. We now apply the inequality in
equation (\ref{eqn_basic}) at each of the check nodes in $T$ and combine
the inequalities by summing them. For each check node, the
left-hand side of equation (\ref{eqn_basic}) is chosen to be a component of the pseudocodeword corresponding to a vertex in $U$ if the edge from $M$ that is incident with the check node is also
incident with that vertex in $U$. After combining all such inequalities in all of the above check nodes, we obtain an inequality that has $\delta c(p_1+\dots+p_e)$
on the left hand side since there are at least $\delta c$ edges from each vertex in $U$ that are incident with $M$. Furthermore, by the same argument, there are most $(1-\delta)c$ edges from each vertex in
$U$ that are not in $M$ but are possibly also incident with the above check nodes. Moreover, at most $(1-\lambda)c$ edges from each vertex in $\dot{U}$ are
possibly incident with the above check nodes and at most $(1-\lambda)c$
edges from each vertex in $V\backslash U'$ are possibly incident with the
above check nodes by the definition of $U'$.
Therefore, we have the following inequality when we sum the inequalities obtained above at all the above check nodes:

{\scriptsize \begin{equation}
 \delta c(p_1+\cdots+p_e) \le (1-\delta)c(p_1+\cdots+p_e) + (1-\lambda)c(\sum_{v_i \in \dot{U}} p_i)+(1-\lambda)c(\sum_{v_i \in V\backslash U'} p_i) .\ \
\label{eqn_*1}
\end{equation}}

The above inequality implies that \[p_1+\cdots+p_e \le \frac{(1-\lambda)}{(2\delta - 1)}(p_{e+1}+\cdots+p_n)<p_{e+1}+\cdots+p_n,\]
from the choice of $\lambda$. Thus, the desired contradiction is achieved. From the definition of pseudocodeword weight on the BSC (Definition \ref{bsc_pscw_wt}), we have $w^{BSC}({\bf p}) > 2e-1 = \frac{2(\alpha n - 1)(3\delta - 2)}{(2\delta - 1)}-1$.

\end{proof}
\vspace{0.1in}

\begin{remark}

\begin{itemize}
\item The proof of the above theorem can also be inferred directly by the result in \cite{feldman}. However, we believe the proof presented here is somewhat simpler than the indirect approach in \cite{feldman}.

\item For the case when $\delta =3/4$, the lower bound on the minimum pseudocodeword weight $w_{\min}$ matches the lower bound on $d_{\min}$ and $s_{\min}$ presented in Lemmas \ref{dmin_A} and \ref{smin_A}. This is particularly appealing since an expander code achieving the lower bound on the minimum distance will also achieve the lower bound on the minimum pseudocodeword and will have no pseudocodewords of weight less than the minimum distance.

\end{itemize}
\end{remark}

\section{Case B}

Let a $(c,d)$-regular bipartite graph $G$ with $n$
left  vertices and $m$
right vertices be a $(\alpha n,\delta c)$ expander. (See Definition~\ref{expander_a}.) An LDPC code is
obtained from $G$ by interpreting the degree $c$ vertices in $G$ as variable nodes
and the degree $d$ vertices as sub-code constraints imposed by a $[d,rd,\epsilon d]$ linear block code\footnote{The
parameters of an $[n,k,d]$ binary
linear block code are the block length $n$, the dimension $k$,
and the minimum distance $d$.}. A valid assignment of values to the variable nodes is one where the (binary)
values assigned to the variable nodes connected to each constraint node
satisfy all the constraints imposed by the subcode, meaning that the binary assignments from the variable nodes connected to each constraint
node form a codeword in the subcode. (See Figure~\ref{caseB}.)
Such an LDPC code is called a {\em generalized LDPC code}.
\vspace{0.1in}

\begin{figure}
\centering{\resizebox{4in}{2.2in}{\includegraphics{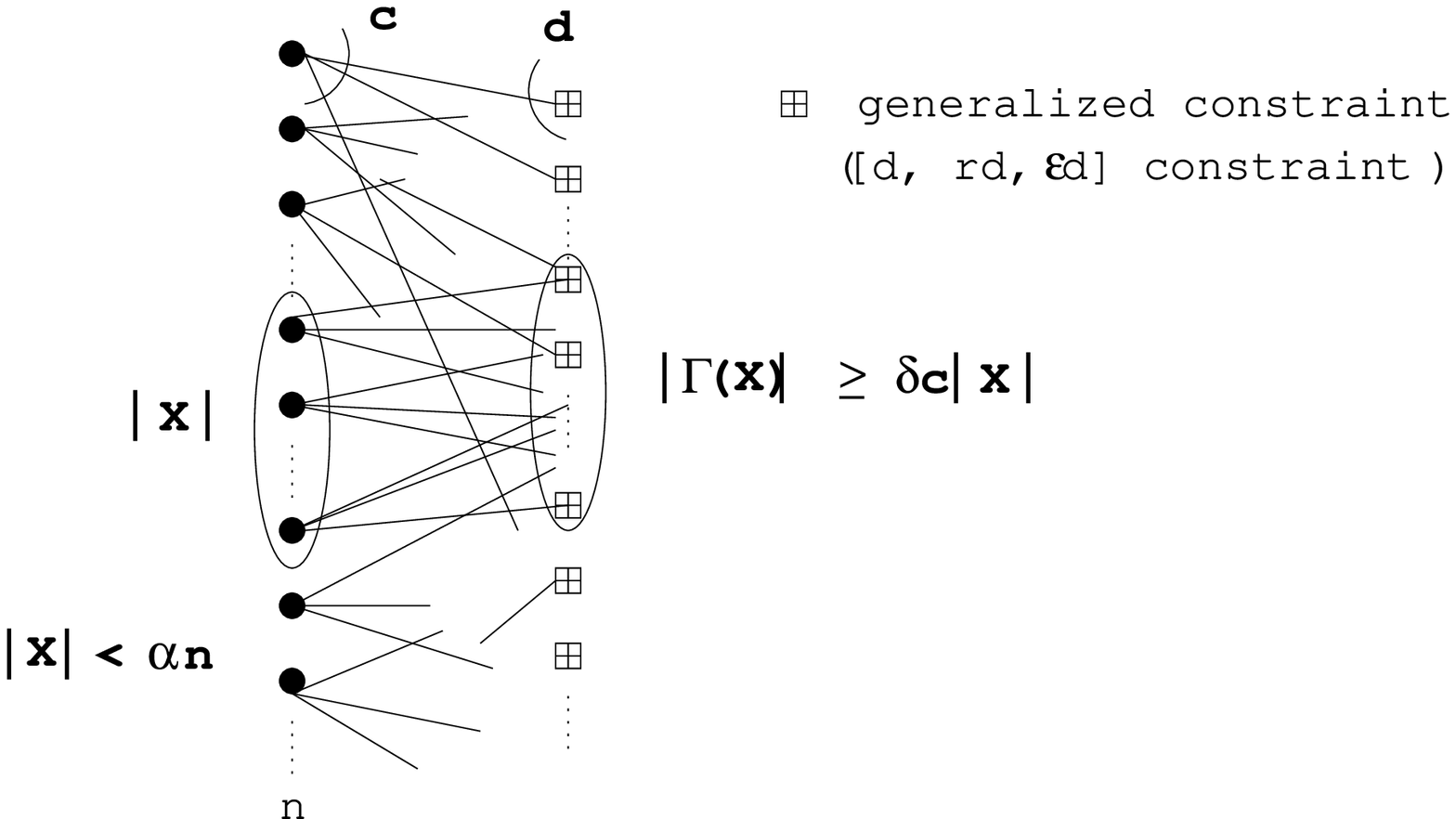}}}
\centerline{\scriptsize Degree $c$ vertices: variable nodes, degree $d$ vertices: sub-code
constraints of a $[d,rd,\epsilon d]$  code.}
\caption{Expander code: Case B.}
\label{caseB}
\end{figure}

\subsection{Minimum distance}

\begin{lemma}\cite{si96} If $\delta > 1/(\epsilon d)$, the LDPC code
obtained from the $(\alpha n,\delta c)$ expander
graph
$G$ as above has minimum distance $d_{\min}\ge \alpha n$.
\end{lemma}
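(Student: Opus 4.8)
The plan is to argue by contradiction, mirroring the edge-counting argument behind Lemma~\ref{dmin_A} but replacing the role of the parity-check condition with the minimum distance $\epsilon d$ of the subcode. Suppose, contrary to the claim, that there were a nonzero codeword $\mathbf{c}$ whose support $S$ (the set of variable nodes assigned the value $1$) satisfies $|S| < \alpha n$. Since $\mathbf{c}$ is nonzero we have $|S|\ge 1$, so the expansion hypothesis applies and gives $|\Gamma(S)| \ge \delta c|S|$, where $\Gamma(S)$ is the set of constraint nodes adjacent to $S$.

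The key step is a local observation at each constraint node, and it is the only place where the subcode enters. Fix any check node $w \in \Gamma(S)$. The values that $\mathbf{c}$ assigns to the $d$ variable nodes adjacent to $w$ must form a codeword of the $[d,rd,\epsilon d]$ subcode; because $w$ has at least one neighbor in $S$, this local codeword is nonzero, and hence has Hamming weight at least $\epsilon d$. As $G$ is a simple graph (no multiple edges), the weight of this local codeword equals the number of neighbors of $w$ lying in $S$. Thus every $w \in \Gamma(S)$ is connected to $S$ at least $\epsilon d$ times. This is the generalized-LDPC analogue of the ``connected at least twice'' condition used for simple codes in Lemma~\ref{dmin_A}.

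With this in hand I would finish by double counting the edges running between $S$ and $\Gamma(S)$. Counting from the left, each of the $|S|$ variable nodes has degree $c$, so there are exactly $c|S|$ such edges. Counting from the right, each node of $\Gamma(S)$ absorbs at least $\epsilon d$ of them, giving $c|S| \ge \epsilon d\,|\Gamma(S)|$. Combining this with the expansion bound $|\Gamma(S)| \ge \delta c|S|$ yields $c|S| \ge \epsilon d\,\delta c|S|$. Dividing through by the positive quantity $c|S|$ produces $1 \ge \epsilon d\,\delta$, i.e.\ $\delta \le 1/(\epsilon d)$, which contradicts the hypothesis $\delta > 1/(\epsilon d)$. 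Hence no nonzero codeword can have support smaller than $\alpha n$, so $d_{\min} \ge \alpha n$.

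I do not expect a serious obstacle: once the local weight bound is isolated, the remainder is routine double counting. The single point requiring care is precisely that local observation, namely that an ``active'' check must see at least $\epsilon d$ ones; this uses both that the subcode has minimum distance $\epsilon d$ and that $G$ is simple, so that local codeword weight coincides with the number of neighbors in $S$. As a consistency check against Case~A, a simple parity-check constraint is a $[d,d-1,2]$ code, so $\epsilon d = 2$ and the condition $\delta > 1/(\epsilon d)$ specializes exactly to the hypothesis $\delta > 1/2$ of Lemma~\ref{dmin_A}.
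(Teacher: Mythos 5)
Your proof is correct, and it is essentially the same expansion-plus-edge-counting argument that the paper uses: the lemma itself is only cited to Sipser--Spielman, but the paper's own proof of the companion stopping-set bound in Case B runs the identical computation in averaged form (average number of connections of a node of $\Gamma(S)$ to $S$ is at most $1/\delta < \epsilon d$, so some constraint sees fewer than $\epsilon d$ ones), which is just your double count $c|S| \ge \epsilon d\,|\Gamma(S)| \ge \epsilon d\,\delta c |S|$ read contrapositively. Your local observation that an active constraint must see at least $\epsilon d$ ones (using simplicity of $G$) and the specialization check to $\delta>1/2$ for Case A are both exactly right.
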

\vspace{0.1in}

\subsection{Minimum stopping set size}
\vspace{0.1in}

A generalized stopping set is as defined in Definition \ref{sset_gen} in Section 2.
Under the assumption that the $[d,rd,\epsilon d]$ subcode has no idle components, meaning that there are no components that are zero in all of the codewords
of the subcode, Definition \ref{sset_gen} reduces to the following: {\em A stopping set in a generalized LDPC code is a set of variable nodes such that every node that is a neighbor of some node $s\in S$ is connected to $S$ at least $\epsilon d$ times.}
\vspace{0.1in}

\begin{lemma} If $\delta > 1/(\epsilon d)$, the LDPC code obtained from the $(\alpha n,\delta c)$ expander graph
$G$ as above has a minimum stopping set size $s_{\min}\ge \alpha n$.
\end{lemma}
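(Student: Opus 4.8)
The plan is to mimic the averaging argument used in the proof of Lemma~\ref{smin_A} for Case A, replacing the threshold $2$ (the minimum number of times a simple parity-check neighbor must be connected to a stopping set) by the threshold $\epsilon d$ (the minimum number of times a subcode-constraint neighbor must be connected, under the no-idle-components reduction of Definition~\ref{sset_gen} stated just above). I would argue by contradiction: suppose there exists a stopping set $S$ with $|S| < \alpha n$.

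First I would invoke the expansion hypothesis. Since $|S| < \alpha n$ and $G$ is an $(\alpha n, \delta c)$ expander, the neighbor set satisfies $|\Gamma(S)| \ge \delta c |S|$. Next I would count edges. Each of the $|S|$ variable nodes has degree $c$, so the total number of edges incident with $S$ is exactly $c|S|$, and every such edge lands in $\Gamma(S)$. Hence the average number of edges connecting a check node in $\Gamma(S)$ back to $S$ is
\[
\frac{c|S|}{|\Gamma(S)|} \;\le\; \frac{c|S|}{\delta c |S|} \;=\; \frac{1}{\delta} \;<\; \epsilon d,
\]
where the final strict inequality is precisely the hypothesis $\delta > 1/(\epsilon d)$. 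Because the average is strictly less than $\epsilon d$, at least one check node in $\Gamma(S)$ is connected to $S$ fewer than $\epsilon d$ times. This contradicts the stopping-set property that every neighbor of $S$ is connected to $S$ at least $\epsilon d$ times, so no stopping set of size less than $\alpha n$ can exist, which gives $s_{\min} \ge \alpha n$.

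The argument has no genuinely hard step; it is a one-line counting estimate, and the only point requiring care is the direction of the inequality. The strict hypothesis $\delta > 1/(\epsilon d)$ (rather than $\ge$) is exactly what forces the existence of a check node with strictly fewer than $\epsilon d$ edges to $S$, which is what the contradiction requires. I would also emphasize that the simplified stopping-set characterization---every neighbor connected at least $\epsilon d$ times---is the reduction of Definition~\ref{sset_gen} valid when the $[d,rd,\epsilon d]$ subcode has no idle components, and this is the only place where the subcode structure enters the proof; it plays the same role that the ``at least twice'' condition played in Lemma~\ref{smin_A}.
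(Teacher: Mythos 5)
Your proof is correct and follows essentially the same averaging argument as the paper's: expansion forces the average number of connections from $\Gamma(S)$ to $S$ below $\epsilon d$, so some constraint node neighbor is connected to $S$ fewer than $\epsilon d$ times. The only cosmetic difference is that the paper closes the contradiction by showing this violates the pseudocodeword inequality in Equation~(\ref{eqn_gen2}), whereas you invoke the no-idle-components reduction of Definition~\ref{sset_gen} stated just before the lemma; these amount to the same thing.
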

\vspace{0.1in}

\begin{proof}

Suppose the contrary that there exists a stopping set $S$ of size smaller than $\alpha n$. Then by Definition \ref{sset_gen},
there is a pseudocodeword ${\bf p}$ whose support has a size smaller than $\alpha n$. By the expansion property
of the graph, the size of the set of
neighbors of $S$ is $|\Gamma(S)|\ge \delta c |S|$. The average number of times a vertex in $\Gamma(S)$ is connected to
the set $S$ is  $\frac{c|S|}{|\Gamma(S)|}\le \frac{c|S|}{\delta c|S|} < d\epsilon$. This means that there is at least one vertex in
$\Gamma(S)$ that is connected to the set $S$ less than $d\epsilon$ times. Therefore, there are
less than $d\epsilon$ non-zero pseudocodeword components connected to that
constraint node in $\Gamma(S)$. If we choose the $d\epsilon/2$ largest components
among them,
then their sum is greater than the sum of the remaining pseudocodeword components at that constraint node. This is a contradiction
to the inequality in Equation \ref{eqn_gen2}, meaning ${\bf p}$ cannot be a
pseudocodeword and therefore $S$ cannot be a stopping set. Thus, the size of $S$
cannot be less than $\alpha n$. \end{proof}

\subsection{Minimum pseudocodeword weight}

\begin{theorem} If $\delta > \frac{2}{(\epsilon d+1)}+\frac{1}{c(\epsilon d+1)}$ such that $\delta c$ is an integer, then the LDPC code obtained from the
$(\alpha n,\delta c)$ expander graph $G$ has a minimum pseudocodeword weight
\[w^{BSC}_{\min}> \frac{2(\alpha n-1)((d\epsilon+1)\delta-2)}{(d\epsilon\delta -1)} -1.\]
\label{case_b_thm}
\end{theorem}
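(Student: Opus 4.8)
The plan is to mirror the proof of Theorem~\ref{case_a_thm} (Case A) line for line, replacing the simple parity-check inequality (\ref{eqn_basic}) by the generalized constraint inequality (\ref{eqn_gen}); its left-hand coefficient $d\epsilon-1$ is exactly what promotes the Case-A denominator $2\delta-1$ to $d\epsilon\delta-1$. As before I would argue by contradiction. Let $\mathbf{p}=(p_1,\dots,p_n)$ be a pseudocodeword with $p_1\ge\cdots\ge p_n$, let $e$ be the smallest index with $p_1+\cdots+p_e\ge p_{e+1}+\cdots+p_n$, and suppose for contradiction that the set $U=\{v_1,\dots,v_e\}$ of the $e$ dominant coordinates satisfies $e\le \frac{(\alpha n-1)((d\epsilon+1)\delta-2)}{d\epsilon\delta-1}$.

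First I would introduce the auxiliary sets with the Case-B constants: set $\lambda = 2 - d\epsilon\delta + \tfrac{1}{c}$ and $\beta = \frac{1-\delta}{(d\epsilon+1)\delta-2}$, let $\dot{U}=\{v_i\notin U : |\Gamma(v_i)\cap\Gamma(U)|\ge (1-\lambda)c+1\}$, and put $U'=U\cup\dot{U}$. Note $\lambda c = 2c+1-d\epsilon(\delta c)$ is an integer since $\delta c$ and the subcode distance $d\epsilon$ are integers, and that the hypothesis $\delta>\frac{2}{d\epsilon+1}+\frac{1}{c(d\epsilon+1)}$ is precisely the inequality $\lambda<\delta$, i.e.\ $\lambda c<\delta c$. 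The assumed bound on $e$ reads $e\le\frac{\alpha n-1}{1+\beta}$, so Lemma~6 of \cite{feldman} gives $|\dot{U}|\le\beta|U|$, whence $|U'|\le(1+\beta)|U|\le\alpha n-1<\alpha n$; the $(\alpha n,\delta c)$-expansion therefore applies to $U'$ and yields $|\Gamma(U')|\ge\delta c\,|U'|$.

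Next I would reuse the Hall's-theorem construction of Case A verbatim: build $\hat G$ with $\delta c$ copies of each vertex of $U'$ joined to $\Gamma(U')$, invoke expansion to verify $|\Gamma(X)|\ge|X|$ for every $X\subseteq\hat U$, and conclude that a $(\delta,\lambda)$-matching $M$ exists (the bound $\lambda c<\delta c$ is what permits keeping $\delta c$ matched edges at each vertex of $U$ while retaining $\lambda c$ at each vertex of $\dot{U}$). Letting $T$ be the check nodes carrying a matched edge into $U$ and summing (\ref{eqn_gen}) over $T$, with each matched $U$-coordinate on the left, places $(d\epsilon-1)\delta c\,(p_1+\cdots+p_e)$ on the left, while on the right each $U$-coordinate recurs at most $(1-\delta)c$ times and each $\dot{U}$- or $V\setminus U'$-coordinate at most $(1-\lambda)c$ times, giving
\[
(d\epsilon-1)\delta c\,(p_1+\cdots+p_e)\le (1-\delta)c\,(p_1+\cdots+p_e)+(1-\lambda)c\,(p_{e+1}+\cdots+p_n).
\]
Collecting the $U$-terms turns the left coefficient into $(d\epsilon-1)\delta-(1-\delta)=d\epsilon\delta-1$, so $p_1+\cdots+p_e\le\frac{1-\lambda}{d\epsilon\delta-1}(p_{e+1}+\cdots+p_n)$; since $1-\lambda=d\epsilon\delta-1-\tfrac1c$, this ratio equals $1-\frac{1}{c(d\epsilon\delta-1)}<1$, contradicting the minimality of $e$. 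Thus $e>\frac{(\alpha n-1)((d\epsilon+1)\delta-2)}{d\epsilon\delta-1}$, and Definition~\ref{bsc_pscw_wt} gives $w^{BSC}_{\min}>2e-1$, the claimed bound.

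The step I expect to be most delicate is the bookkeeping that simultaneously fixes $\lambda$ and $\beta$: one must choose $\lambda$ so that (a)~$\lambda<\delta$, guaranteeing the matching and consuming the hypothesis on $\delta$; (b)~$\frac{1-\lambda}{d\epsilon\delta-1}<1$, giving the contradiction; and (c)~the value $\beta=\frac{(1-\delta)c}{(\delta-\lambda)c+1}$ produced by Lemma~6 equals $\frac{1-\delta}{(d\epsilon+1)\delta-2}$, so the threshold on $e$ matches the target. Condition (c) forces $\lambda=2-d\epsilon\delta+\tfrac1c$, after which (a) and (b) both collapse to the stated hypothesis. A secondary point to watch is the large-$d\epsilon$ regime, where $\lambda$ may be negative and $\dot U$ empty; there the coordinate bound $(1-\lambda)c$ must be read as the trivial bound $c$, and one verifies that $d\epsilon\delta>2$ (which holds once $\lambda\le0$) still yields the contradiction.
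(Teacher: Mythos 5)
Your proposal is correct and follows essentially the same route as the paper's proof: the same constants $\lambda=2-d\epsilon\delta+\tfrac1c$ and $\beta=\frac{1-\delta}{(d\epsilon+1)\delta-2}$, the same $(\delta,\lambda)$-matching via the Hall's-theorem construction of Case A, and the same summation of inequality (\ref{eqn_gen}) leading to the coefficient $d\epsilon\delta-1$. The only cosmetic difference is that you invoke Lemma~6 of \cite{feldman} for $|\dot{U}|\le\beta|U|$ where the paper reproves that step inline with the Case-B constants (your condition~(c) bookkeeping is exactly that inline computation), and your remark on the $\lambda\le 0$ regime is a point of care the paper leaves implicit.
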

\vspace{0.1in}

\begin{proof} The proof is by contradiction. Suppose $G$ is an $(\alpha n, \delta c)$-expander, where $\delta > \frac{2}{(d\epsilon +
1)}+\frac{1}{(d\epsilon+1)c}$. Then, assuming ${\bf p}$ is
a pseudocodeword of the LDPC constraint graph $G$, the proof follows that of
Case A by choosing a set of variable nodes $U$ corresponding to the $e$ dominant components of the pseudocodeword and
letting $|U|=e\le \frac{(\alpha n-1)}{1+\beta}$, where $\beta=\frac{1-\delta}{(d\epsilon+1)\delta-2}$.
We need to show that
$w_{\min}^{BSC} > 2e-1 = 2\frac{(\alpha n - 1)((d\epsilon+1)\delta - 2)}{(d\epsilon\delta - 1)}-1$. By
using a strong subcode, the $\delta$ required is less than that in Case A, thereby allowing $\alpha$ to be
larger and yielding a larger bound overall. The argument is the same
as in the proof of Case A, where now we set $\lambda = 2-d\epsilon \delta + \frac{1}{c}$.

Following the proof of Theorem 1, we will first show that if
$|U|=e\le \frac{(\alpha n -1)}{1+\beta}$, then $|\dot{U}|\le \beta |U|$.
Suppose to the contrary, $|\dot{U}|> \beta |U|$, then that means there is some subset $\ddot{U}\subset \dot{U}$ such that
$|\ddot{U}|=\lfloor \beta |U|\rfloor +1$. This means that $|U\cup \ddot{U}|=|U|+\lfloor \beta |U|\rfloor +1\le (1+\beta)|U|+1 = \alpha n$.
Since $G$ is an $(\alpha n, \delta c)$ expander, we then have $|\Gamma(U\cup \ddot{U})|\ge \delta c(|U|+|\ddot{U}|)$.
However, observe that $|\Gamma(U\cup \ddot{U})|=|\Gamma(U)|+|\Gamma(\ddot{U})\backslash \Gamma(U)|\le c|U|+ (\lambda c-1)|\ddot{U}|$
since $|\Gamma(U)|\le c|U|$ and $|\Gamma(\ddot{U})\backslash \Gamma(U)|\le (\lambda c-1)|\ddot{U}|$ by definition. Combining the above
inequalities, we have $\delta c(|U|+|\ddot{U}|) \le c|U|+(\lambda c-1)|\ddot{U}|$, implying $|\ddot{U}|\le \frac{c(1-\delta)|U|}{c(\delta-\lambda)+1} = \beta |U|$. This contradicts the
choice of $\ddot{U}$ above. Thus, if $|U|=e\le \frac{(\alpha n-1)}{1+\beta}$, then
$|\dot{U}|\le \beta |U|$.

Following the rest of the proof of Theorem 1 and using the inequality in Equation \ref{eqn_gen} for the pseudocodeword components,
the first inequality in equation (\ref{eqn_*1}) in the proof of Case A now becomes
\vspace{-0in}
{ \[ (d\epsilon - 1)\delta c(p_1+\cdots+p_e) \le
(1-\delta)c(p_1+\cdots+p_e)\vspace{-0.1in}\]\[\vspace{-0.1in}+(1-\lambda)c(\sum_{v_i
\in \dot{U}}
p_i)+(1-\lambda)c(\sum_{v_i \in V\backslash U'} p_i). \]}

This yields
\[p_1+\cdots+p_e \le \frac{(1-\lambda)}{(d\epsilon \delta - 1)}(p_{e+1}+\cdots+p_n)<p_{e+1}+\cdots+p_n\]
from the choice of $\lambda$. Thus, by Definition \ref{bsc_pscw_wt}, the weight of ${\bf p}$ is
$w({\bf p})> 2e-1 $.
\end{proof}
\vspace{0.1in}

\begin{remark}

\begin{itemize}
\item Since $d\epsilon \ge 2$ for any judicious choice of subcode, the lower bound in Theorem 2 is always greater than the lower bound
in Theorem 1. Further, the graph need not be as good an expander in Case B as in Case A for the lower bound to hold. Thus, using
strong subcodes is advantageous for constructing good LDPC codes from expander graphs.

\item Note that for $\delta=\frac{3}{d\epsilon +2}$,  the lower bound on the
pseudocodeword
weight equals the lower bound on the minimum distance and minimum stopping set size.

\end{itemize}
\end{remark}

\section{Case C}

\vspace{0.1in}

\begin{definition}{\rm
A connected, simple, graph $G$ is said to be a {\em $(n,d,\mu)$ expander} if $G$ has $n$ vertices,
is $d$-regular, and the second largest eigenvalue of $G$ (in absolute value) is $\mu$.}
\end{definition}

Let a $d$-regular graph $G$ be an $(n,d,\mu)$
expander. An LDPC code is obtained from $G$ by interpreting the edges in $G$ as variable nodes
and the degree $d$ vertices as constraint nodes imposing
constraints of an $[d,rd,\epsilon d]$ linear block code. (See Figure~\ref{caseC}.)
The resulting LDPC code has block length $N=nd/2$ and rate $R\ge 2r-1$.

\begin{figure}
\centering{\resizebox{5.75in}{2.9in}{\includegraphics{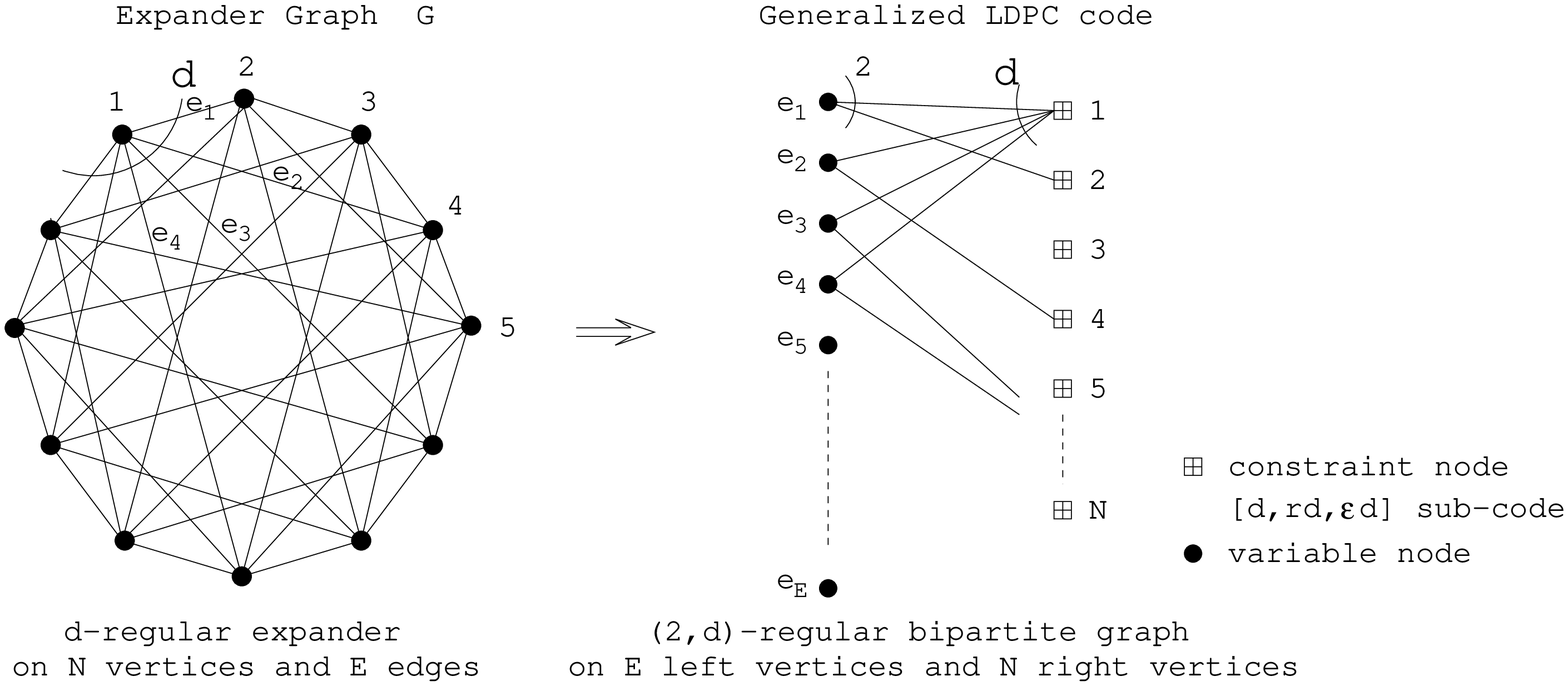}}}
\centerline{\scriptsize Edges: variable nodes, degree $d$ vertices: constraints of a $[d,rd,\epsilon d]$ code.}
\caption{Expander code: Case C.}
\label{caseC}
\end{figure}

\vspace{0.1in}

We now state a particularly useful result by Alon and Chung
\cite{alon_chung,si96}
describing the expansion of a $d$-regular graph.

\begin{lemma}(Alon-Chung)
Let $G$ be a $d$-regular graph on $n$ vertices and let $\mu$ be the second largest eigenvalue of its
adjacency matrix. Then every subset $S$ of $\gamma n$ vertices contains at most $\frac{nd}{2}(\gamma^2 +\frac{\mu}{d}(\gamma-\gamma^2))$
edges in the subgraph induced by $S$ in $G$.
\label{alon_chung_lemma}
\end{lemma}

\subsection{Minimum distance}
\begin{lemma}\cite{si96}  The LDPC code obtained from an $(n,d,\mu)$
expander graph $G$ as above has minimum distance $d_{\min}\ge N \frac{(\epsilon -\frac{\mu}{d})^2}{(1-\frac{\mu}{d})^2}$.
\end{lemma}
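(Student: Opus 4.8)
The plan is to run the classical Sipser--Spielman counting argument, playing the local subcode distance against the global edge-count estimate of Lemma~\ref{alon_chung_lemma}. Let $\mathbf{c}$ be a nonzero codeword of minimum weight $w = d_{\min}$, and let $E_{\mathbf c}$ be its support, viewed as a set of edges of $G$. Let $S$ be the set of vertices of $G$ incident to at least one edge of $E_{\mathbf c}$, and write $|S| = \gamma n$. By construction every edge of $E_{\mathbf c}$ has both of its endpoints in $S$, so $E_{\mathbf c}$ lies entirely inside the subgraph induced by $S$.

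First I would extract the local constraint. At each vertex $v \in S$ the bits carried by the $d$ incident edges form a codeword of the $[d,rd,\epsilon d]$ subcode, and this local codeword is nonzero precisely because $v \in S$; hence it has weight at least $\epsilon d$. Thus every vertex of $S$ is incident to at least $\epsilon d$ edges of $E_{\mathbf c}$. Summing this over all of $S$ and counting each edge of $E_{\mathbf c}$ twice (both endpoints lie in $S$) gives $2w \ge \epsilon d\,|S| = \epsilon d\,\gamma n$, i.e.\ $w \ge \tfrac12 \epsilon d\,\gamma n$.

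Next I would impose the expansion bound. Since $E_{\mathbf c}$ sits inside the subgraph induced by the $\gamma n$ vertices of $S$, Lemma~\ref{alon_chung_lemma} yields
\[
w \le \frac{nd}{2}\Big(\gamma^2 + \tfrac{\mu}{d}(\gamma - \gamma^2)\Big).
\]
Chaining this with the lower bound $w \ge \tfrac12 \epsilon d\,\gamma n$ and cancelling the common factor $\tfrac12 d\,\gamma n$ (legitimate since $\gamma > 0$) gives $\epsilon \le \gamma + \tfrac{\mu}{d}(1-\gamma)$, which rearranges to
\[
\gamma \ge \frac{\epsilon - \mu/d}{1 - \mu/d} =: \gamma_0 .
\]
Feeding $|S| = \gamma n \ge \gamma_0 n$ back into $w \ge \tfrac12 \epsilon d\,\gamma n$ produces $w \ge \tfrac12 \epsilon d\,\gamma_0 n = N\epsilon\gamma_0$, and since $\epsilon \le 1$ forces $\epsilon \ge \gamma_0$, I conclude $d_{\min} = w \ge N\gamma_0^2 = N\frac{(\epsilon-\mu/d)^2}{(1-\mu/d)^2}$, as claimed (in fact with the slightly stronger constant $N\epsilon\gamma_0$).

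The only real subtlety is the containment step: one must check that $E_{\mathbf c}$ genuinely lies in the subgraph induced by $S$, so that the \emph{upper} bound of Lemma~\ref{alon_chung_lemma} applies to $w$, while the \emph{lower} bound uses only the support-degrees at the vertices of $S$. Everything else is routine algebra, and the hypothesis $\epsilon > \mu/d$ (needed for $\gamma_0 > 0$, i.e.\ for a nonvacuous bound) is implicit.
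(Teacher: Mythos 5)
Your argument is correct, and it is the same Alon--Chung counting argument that the paper itself uses for the companion stopping-set bound (Lemma~\ref{case_c_lemma}); the paper does not reprove this minimum-distance lemma but simply cites Sipser--Spielman, whose proof this reproduces. Note that your chain of inequalities actually yields $d_{\min}\ge N\epsilon\frac{\epsilon-\mu/d}{1-\mu/d}$, which is exactly the sharpened form the authors point out in the remark immediately following the lemma, and the stated $N\frac{(\epsilon-\mu/d)^2}{(1-\mu/d)^2}$ then follows from $\epsilon\le 1$ as you observe.
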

\vspace{0.1in}

Note that the above result of Sipser and Spielman can be improved by a tighter bound in the last step of their proof
in \cite{si96} to $d_{\min}\ge N \epsilon \frac{(\epsilon -\frac{\mu}{d})}{(1-\frac{\mu}{d})}$.

\subsection{Minimum stopping set size}

\vspace{0.1in}\begin{lemma}  The LDPC code obtained from an $(n,d,\mu)$ expander graph
$G$ has a minimum stopping set size $s_{\min}\ge N \epsilon \frac{(\epsilon -\frac{\mu}{d})}{(1-\frac{\mu}{d})}$.
\label{case_c_lemma}
\end{lemma}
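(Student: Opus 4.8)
The plan is to follow the structure of the Sipser--Spielman minimum-distance argument for this edge-vertex construction, replacing the codeword-support bookkeeping by the stopping-set degree condition and invoking the Alon--Chung Lemma (Lemma~\ref{alon_chung_lemma}) as the key geometric input. First I would let $S$ be an arbitrary stopping set; since the variable nodes are the edges of $G$, the set $S$ is a set of edges. Let $T$ denote the set of vertices of $G$ incident with at least one edge of $S$, and write $|T|=\gamma n$. The constraint imposed at each vertex is the $[d,rd,\epsilon d]$ subcode, which has minimum distance $\epsilon d$, so (assuming the subcode has no idle components) the reduced form of Definition~\ref{sset_gen} forces every vertex incident with some edge of $S$ to be incident with at least $\epsilon d$ edges of $S$. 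Equivalently, in the subgraph on vertex set $T$ with edge set $S$, every vertex has degree at least $\epsilon d$.

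The counting then runs in two directions. Summing the vertex degrees over $T$ and dividing by two gives the lower bound $|S|\ge \tfrac{1}{2}|T|\,\epsilon d=\tfrac{1}{2}\gamma n\epsilon d=\gamma\epsilon N$, using $N=nd/2$. On the other hand, every edge of $S$ has both of its endpoints in $T$, so $S$ lies entirely inside the subgraph of $G$ induced by $T$; Lemma~\ref{alon_chung_lemma} then caps the number of such edges by $\tfrac{nd}{2}\bigl(\gamma^2+\tfrac{\mu}{d}(\gamma-\gamma^2)\bigr)$. Chaining the lower and upper bounds and cancelling the common positive factor $\tfrac{nd}{2}\gamma$ (positive because a nonempty stopping set forces $\gamma>0$) yields $\epsilon\le\gamma+\tfrac{\mu}{d}(1-\gamma)$, which rearranges, using $1-\mu/d>0$, to $\gamma\ge\frac{\epsilon-\mu/d}{1-\mu/d}$.

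Finally I would substitute this lower bound on $\gamma$ back into $|S|\ge\gamma\epsilon N$ to obtain $|S|\ge N\epsilon\,\frac{\epsilon-\mu/d}{1-\mu/d}$, the desired estimate, valid for every stopping set and hence for $s_{\min}$. The step I expect to require the most care is justifying the per-vertex degree condition: one must argue from Definition~\ref{sset_gen} together with the no-idle-components assumption that a stopping set genuinely forces at least $\epsilon d$ incident edges at each touched vertex, exactly paralleling the role that the subcode minimum distance $\epsilon d$ plays for codeword supports in the minimum-distance proof. Once that combinatorial condition is established, the remainder is the routine double count against Alon--Chung plus the short algebra isolating $\gamma$, mirroring the improved minimum-distance bound recorded just after the Case C distance lemma.
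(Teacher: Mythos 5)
Your proposal is correct and follows essentially the same route as the paper's proof: the key inputs in both are the Alon--Chung lemma applied to the vertex set touched by the stopping set and the condition (from Definition~\ref{sset_gen} with no idle components, equivalently the violation of Equation~\ref{eqn_gen2}) that every touched vertex meets at least $\epsilon d$ edges of $S$. The only difference is cosmetic: you run a direct double count ($2|S|\ge\epsilon d\,|T|$ against the induced-subgraph edge bound) where the paper parametrizes $|S|$ by $\gamma$ and argues by contradiction via the average degree; the algebra and the resulting bound are identical.
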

\vspace{0.1in}
Note that we again use Definition \ref{sset_gen} for stopping sets in $G$.

\begin{proof}

Let $S$ be a subset of variable nodes (edges in $G$) of size
$\frac{nd}{2}(\gamma^2+\frac{\mu}{d}(\gamma-\gamma^2))$ representing a
stopping set in $G$. Then $S$ is the support of some pseudocodeword ${\bf
p}$ in $G$. By the Alon-Chung lemma, the set $S$ has at least $\gamma n$
constraint node neighbors $\Gamma(S)$. Since each edge in $S$ has two
constraint node neighbors in $\Gamma(S)$, this implies that the average
number of edges in $S$ connected to a constraint node in $\Gamma(S)$ is
at most $\frac{2 \frac{nd}{2}(\gamma^2+\frac{\mu}{d}(\gamma-\gamma^2))}{\gamma
n}$. However if
\begin{equation}
\frac{2
\frac{nd}{2}(\gamma^2+\frac{\mu}{d}(\gamma-\gamma^2))}{\gamma
n}=d(\gamma+\frac{\mu}{d}(1-\gamma)) < \epsilon d, \ \
\label{eqn_**2}
\end{equation}
then there is
at least one node in $\Gamma(S)$ that is connected to $S$ fewer than
$\epsilon d$ times to $S$. That means that fewer than $\epsilon d$
non-zero components of ${\bf p}$ are connected to a constraint node. It
can now be shown that the inequality in Equation \ref{eqn_gen2} is
violated, implying that ${\bf p}$ cannot be a pseudocodeword (and, $S$ is
not a stopping set.)

The above inequality in equation (\ref{eqn_**2}) holds for $\gamma <
\frac{\epsilon-\frac{\mu}{d}}{1 - \frac{\mu}{d}}$.
Substituting the value of $\gamma$ in $|S|=\frac{nd}{2}(\gamma^2+\frac{\mu}{d}(\gamma-\gamma^2))$,
we infer that the graph cannot contain a stopping set of size less than $\frac{nd}{2}\epsilon\frac{(\epsilon -\frac{\mu}{d})}{(1-\frac{\mu}{d})}$. Hence,
\[s_{\min}\ge \frac{nd}{2}\epsilon\frac{(\epsilon -\frac{\mu}{d})}{(1-\frac{\mu}{d})} =N\epsilon\frac{(\epsilon -\frac{\mu}{d})}{(1-\frac{\mu}{d})}.\]
\end{proof}

\subsection{Minimum pseudocodeword weight}

\begin{theorem}  The LDPC code obtained from an $(n,d,\mu)$ expander graph
$G$  has a minimum BSC pseudocodeword weight lower bounded as follows: \[w^{BSC}_{\min}\ge N \epsilon \frac{(\frac{\epsilon}{2}-\frac{\mu}{d})}{(1-\frac{\mu}{d})} .\]
\label{case_c_thm}
\end{theorem}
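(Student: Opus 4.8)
The plan is to lift the contradiction argument of Lemma \ref{case_c_lemma} from a support count to a value-weighted count, replacing the effective expansion threshold $\epsilon d$ by $\epsilon d/2$ and reading the outer factor $2$ off the definition of $w_{BSC}$. Let $\mathbf{p}$ be a pseudocodeword with $p_1\ge\cdots\ge p_N$, and let $t=e$ be the index from Definition \ref{bsc_pscw_wt}, i.e. the smallest index with $p_1+\cdots+p_t\ge p_{t+1}+\cdots+p_N$, so that $w_{BSC}(\mathbf{p})\ge 2t-1$. Let $U$ be the set of $t$ edges of $G$ (variable nodes) carrying the largest components, $\Gamma(U)$ the set of vertices (constraint nodes) incident to $U$, and $\gamma=|\Gamma(U)|/n$. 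Every edge of $U$ has both endpoints in $\Gamma(U)$, so $U$ lies in the subgraph induced by $\Gamma(U)$, and Lemma \ref{alon_chung_lemma} gives $t\le\frac{nd}{2}(\gamma^2+\frac{\mu}{d}(\gamma-\gamma^2))$; equivalently the average number of edges of $U$ meeting a node of $\Gamma(U)$ is $2t/(\gamma n)\le d(\gamma+\frac{\mu}{d}(1-\gamma))$.

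First I would pin down the target by substitution, exactly as in Lemma \ref{case_c_lemma}. The contradiction will be driven by forcing this average incidence strictly below $\epsilon d/2$, which happens precisely when $\gamma<\gamma^{\ast}:=(\frac{\epsilon}{2}-\frac{\mu}{d})/(1-\frac{\mu}{d})$. Substituting $\gamma^{\ast}$ into the Alon--Chung size collapses it to $\frac{nd}{2}\gamma^{\ast}\cdot\frac{\epsilon}{2}=N\frac{\epsilon}{2}\frac{\epsilon/2-\mu/d}{1-\mu/d}$, so one infers that $G$ cannot carry a dominant set $U$ smaller than this value; then $w_{BSC}(\mathbf{p})\ge 2t$ is the claimed bound. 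This displays where the theorem departs from Lemma \ref{case_c_lemma}: the halving of the incidence threshold from $\epsilon d$ to $\epsilon d/2$ produces the $\epsilon/2$ inside the numerator, while the outer factor $2$ is exactly the $2$ in $w_{BSC}=2t$.

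The analytic core is a summation of inequality (\ref{eqn_gen2}) over $\Gamma(U)$. Assuming for contradiction that $t$ is below the target, so $\gamma<\gamma^{\ast}$ and the average incidence is below $\epsilon d/2$, consider a constraint node $v$ meeting at most $\lfloor d\epsilon/2\rfloor$ edges of $U$. Since the edges of $U$ carry the globally largest components, inequality (\ref{eqn_gen2}) applied with these on the left reads $A_v\le B_v$, where $A_v$ and $B_v$ are the sums of the $U$- and non-$U$-components at $v$. Summing over $v\in\Gamma(U)$ and counting each edge of $U$ at both of its endpoints gives $2(p_1+\cdots+p_t)=\sum_v A_v\le\sum_v B_v\le 2(p_{t+1}+\cdots+p_N)$, hence $p_1+\cdots+p_t<p_{t+1}+\cdots+p_N$ strictly once $\gamma<\gamma^{\ast}$, contradicting the defining property of $t$.

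The hard part is the local-to-global passage. Lemma \ref{alon_chung_lemma} controls only the \emph{average} $U$-incidence over $\Gamma(U)$, whereas the clean double count above needs every dominant edge to sit among the top $\lfloor d\epsilon/2\rfloor$ contributions at \emph{both} endpoints; this fails at nodes meeting more than $\lfloor d\epsilon/2\rfloor$ edges of $U$, and (unlike the support-based Lemma \ref{case_c_lemma}, where a single deficient node already violates (\ref{eqn_gen2})) a single node no longer suffices for a fractional pseudocodeword. I expect to resolve this as in Case A: enlarge $U$ to $U'=U\cup\dot{U}$ by adjoining the edges heavily incident to $\Gamma(U)$, bound $|\dot{U}|$ by a fixed multiple of $|U|$ so that $|U'|$ still lies below the expansion threshold, and then invoke Hall's Theorem to extract a degree-balanced assignment (a $(\delta,\lambda)$-type matching) that distributes the dominant edges so the top-$\lfloor d\epsilon/2\rfloor$ accounting is valid at every node simultaneously. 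Checking that the slack ceded to the heavily incident edges does not erode the outer factor $2$ is the step that must be executed with care.
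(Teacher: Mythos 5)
Your setup (Alon--Chung, contradiction on the size of the dominant set, target value $N\epsilon\gamma^{\ast}$ with $\gamma^{\ast}=(\frac{\epsilon}{2}-\frac{\mu}{d})/(1-\frac{\mu}{d})$) matches the paper, but the analytic core has a genuine gap that you correctly flag and do not close, and the repair you sketch will not close it. Your double count $\sum_v A_v = 2(p_1+\cdots+p_t)$ requires that \emph{every} constraint node of $\Gamma(U)$ meet at most $\lfloor d\epsilon/2\rfloor$ edges of $U$, so that both endpoints of each dominant edge can be placed on the left of (\ref{eqn_gen2}); Lemma \ref{alon_chung_lemma} controls only the average, and the per-node bound is false in general (the $t$ largest components may include all $d$ edges at a single vertex even when $t$ is small). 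The Case-A repair does not transfer: a matching can only guarantee that each dominant edge is accounted for at \emph{one} of its two endpoints, and with the coefficient-$1$ inequality (\ref{eqn_gen2}) a degree-$2$ variable node that lands on the left once and on the right once contributes net zero, so the summed inequality degenerates and yields no contradiction. (Moreover, $\dot{U}$ in Case A controls variable nodes outside $U$ that crowd $\Gamma(U)$; your obstruction is the opposite one --- constraint nodes crowded by $U$ itself.)

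The paper escapes exactly this trap by switching to the stronger inequality (\ref{eqn_gen3}), which places only $\lfloor d\epsilon/4\rfloor$ terms on the left but with coefficient $3$. Using the expansion bound $|\Gamma(X)|\ge \frac{4}{d\epsilon}|X|$ for all $X\subseteq X_e$ (supplied by Alon--Chung when $\gamma\le\gamma^{\ast}$) and a Hall-type augmenting-path argument, it constructs an $\epsilon$-matching $M$ in which every dominant variable node is matched at least once and every constraint node receives at most $d\epsilon/4$ matched edges. Summing (\ref{eqn_gen3}) with the matched components on the left gives $3\sum_{i\in X_e}p_i \le \sum_{i\in X_e}p_i + 2\sum_{i\notin X_e}p_i$: the coefficient $3$, minus the one unmatched endpoint of each dominant edge appearing on the right, leaves exactly the net factor $2$ per dominant edge that you were trying to obtain by counting both endpoints. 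This is why the matching threshold is $d\epsilon/4$ rather than $d\epsilon/2$ and why (\ref{eqn_gen3}) rather than (\ref{eqn_gen2}) is the inequality to sum. As written, the step you yourself identify as "to be executed with care" is precisely where your argument fails, and it needs this device (or an equivalent one) to go through.
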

\vspace{0.1in}

\begin{proof}
For sake of simplicity, we assume that $\frac{d\epsilon}{4}$ is an
integer. However, it is easy to extend the proof for any value of
$d\epsilon$. The $d$-regular graph $G$ can be transformed to a
$(2,d)$-regular bipartite graph $G'$ by representing every edge in
$G$ by a vertex in $G'$ ({\em the variable nodes}) and every vertex in $G$
by a vertex in $G'$ ({\em the constraint nodes})
and connecting the variable nodes to the constraint nodes in $G'$ in a
natural way. The variable nodes have degree two and they represent
codebits of the LDPC code $C$, whereas the constraint nodes have
degree $d$ and each represents a $[d,rd,\epsilon d]$-subcode
constraints.

Let ${\bf p}=(p_1,p_2,\dots,p_N)$ be a pseudocodeword, where $N=\frac{nd}{2}$ is the number of edges in $G$ and also the length of the LDPC code. Without loss of generality, let us assume that $p_1 \ge p_2 \ge \cdots \ge p_N$. Let $e$ be the smallest number such that $p_1+p_2\dots+p_e > p_{e+1}+\dots+p_N$.
Let $X_e$ be the set of edges in $G$ that correspond to the support of the $e$
largest components of ${\bf p}$, and let $\Gamma(X_e)$ be the set of
vertices incident on $X_e$. Note that in the transformed graph $G'$, $X_e$
is a subset of the variable nodes, and $\Gamma(X_e)$ is its set of
neighbors.

Let $|X_e|= \frac{nd}{2}(\gamma^2+\frac{\mu}{d}(\gamma-\gamma^2))$,
where $\gamma \le
(\frac{\frac{\epsilon}{2}-\frac{\mu}{d}}{1-\frac{\mu}{d}})$. Since
$G$ is an $(n,d,\mu)$ graph, we have $|\Gamma(X_e)|\ge \gamma n$. We
now claim that there is a set of edges $M$ in $G'$ called an
$\epsilon$-matching such that (i) every vertex in $X_e$ in the graph
$G'$ is incident with at least one edge from $M$ and (ii) every
vertex in $\Gamma(X_e)$ in the graph $G'$ is incident with at most
$d\epsilon/4$ edges from $M$.

Given the claim, we can apply the pseudocodeword inequality from
equation \ref{eqn_gen3} at each of the vertices in $\Gamma(X_e)$
that is incident with edges from $M$. For each such vertex $w$ in
$\Gamma(X_e)$, the left-hand side of equation (\ref{eqn_gen3}) is
chosen to have the $d\epsilon/4$ or less components of the
pseudocodeword that correspond to the vertices in $X_e$ that are
connected to $w$ via an edge from $M$. After combining all such
inequalities in all of the above constraint nodes, we obtain an
inequality that has $3(p_1+\dots+p_e)$ on the left hand side.

Furthermore, there is at most one edge from each vertex in $X_e$
that is not in $M$ but is possibly also incident with the above
constraint nodes in $\Gamma(X_e)$. Moreover, at most two edges from
each vertex in $V-X_e$ are possibly incident with the above
constraint nodes. Therefore, after applying the pseudocodeword
inequality (equation \ref{eqn_gen3}) as above at each of these
constraint nodes and summing these inequalities, we obtain the
following inequality:
\[3\Big{(}\sum_{i\in X_e} p_i\Big{)} \le \sum_{i\in X_e} p_i + 2\Big{(}\sum_{i \notin X_e} p_i\Big{)}.\]
Simplifying, we get
\[\Big{(}\sum_{i\in X_e} p_i\Big{)} \le \Big{(}\sum_{i \notin X_e} p_i\Big{)}.\]

By the definition of the pseudocodeword weight on the BSC channel (see Definition
\ref{bsc_pscw_wt}), we have that the pseudocodeword weight of ${\bf p}$ is
$w({\bf p}) \ge 2|X_e|$. Since
$|X_e|=\frac{nd}{2}(\gamma^2+\frac{\mu}{d}(\gamma-\gamma^2))$, for $\gamma
\le (\frac{\frac{\epsilon}{2}-\frac{\mu}{d}}{1-\frac{\mu}{d}})$, we have
$w({\bf p})\ge 2
(\frac{nd}{2})(\frac{\epsilon}{2})(\frac{\frac{\epsilon}{2}-\frac{\mu}{d}}{1-\frac{\mu}{d}})=N\epsilon(\frac{\frac{\epsilon}{2}-\frac{\mu}{d}}{1-\frac{\mu}{d}})$.
This proves the desired lower bound on $w_{\min}$.

To prove the claim, observe that for any set $X$ of left vertices in
$G'$ such that $|X|=N (\gamma^2+\frac{\mu}{d}(\gamma-\gamma^2))$
where $\gamma\le
(\frac{\frac{\epsilon}{2}-\frac{\mu}{d}}{1-\frac{\mu}{d}})$, we have
$|\Gamma(X)| \ge \gamma n \ge \frac{4}{d\epsilon}|X|$. In other
words, for every $X$ such that $|X|=N
(\gamma^2+\frac{\mu}{d}(\gamma-\gamma^2))$ where $\gamma\le
(\frac{\frac{\epsilon}{2}-\frac{\mu}{d}}{1-\frac{\mu}{d}})$, we have
$\frac{d\epsilon}{4}|\Gamma(X)|\ge |X|. $ Thus, for any $X\subseteq
X_e$, we have
\begin{equation}
(\frac{d\epsilon}{4}) |\Gamma(X)|\ge |X|.
\label{eqn_caseC_new2}
\end{equation}

We now prove the claim that there is an $\epsilon$-matching for the
set $X_e$ using contradiction. The proof is very similar to the
converse of Hall's marriage theorem. We want to show that there is a
set of edges $M$ in $G'$ such that: (i) every $v\in X_e$ is incident
with at least one edge from $M$ and (ii) every $w\in \Gamma(X_e)$ is
incident with at most $d\epsilon/4$ edges from $M$. We will prove
this by showing that there is a matching $M$ such that: (i) every
$v\in X_e$ is incident with exactly one edge from $M$ and (ii) every
$w\in \Gamma(X_e)$ is incident with either exactly $d\epsilon/4$
edges from $M$ or zero edges from $M$.

We consider the induced subgraph $G''$ of $X_e$ in $G'$. Suppose to
the contrary no such matching exists, then we will assume that there
is a maximum matching $M'$ such that the maximum number of vertices
in $X_e$ are matched to the vertices in $\Gamma(X_e)$ as described
above. That is, $M'$ is the maximum matching such that as many
vertices in $X_e$ are each incident with one edge from $M'$ and all
the vertices in $\Gamma(X_e)$ are incident with either zero or
exactly $d\epsilon/4$ edges from $M'$. Since we assume that not all
the vertices in $X_e$ are incident with edges in $M'$, there is a
vertex $v\in X_e$ that is not incident with any edge from $M'$. Now,
we let $S$ be the set of vertices in $X_e$ that are connected to $v$
by an $M'$-alternating path\footnote{Refer to \cite{vanLint} for the
definition of an $M'$-alternating path.} in $G''$ and let $T$ be the
set of vertices in $\Gamma(X_e)$ that are connected to $v$ by an
$M'$-alternating path in $G''$. Then, it is clear that $S\subset
X_e$ and $\Gamma(S) =T$. Furthermore, every vertex in $S-v$ has one
edge incident from $M'$ that is connected to some vertex in $T$ and
every vertex in $T$ has $d\epsilon/4$ edges incident from $M'$ that
are connected to vertices in $S-v$. (Since $M'$ is a
maximum-matching, it is easy to show that there is no
$M'$-augmenting path as defined in \cite{vanLint}.) This means that
$(\frac{d\epsilon}{4})|T|=|S|-1$, which contradicts equation
\ref{eqn_caseC_new2}. This proves that there exists a matching $M$
as described above.
\end{proof}
\vspace{0.1in}

The above proof holds even when $d\epsilon/4$ is not an integer. In
that case we simply replace $\frac{d\epsilon}{4}$ with $\lfloor
\frac{d\epsilon}{4}\rfloor$ in the above when $d\epsilon/4>1$. In
the case when $d\epsilon/4<1$, the proof is trivial since the
$\epsilon$-matching condition follows directly from Hall's marriage
theorem.

\begin{remark}
\begin{itemize}
\item Note that the lower bound on the minimum pseudocodeword weight closely resembles  the lower bound on the minimum distance and the minimum stopping set size. The only difference is a factor of two in the $\epsilon$ term within the braces in Lemma \ref{case_c_lemma} and Theorem \ref{case_c_thm}.
\item The lower bound suggests that if one were to use good expanding graphs such as the Ramanujan graphs from the construction
in \cite{lu88a} and choose an appropriate choice of subcodes having minimum distance at least twice the second eigenvalue of the expander
then the resulting code will have a good pseudocodeword weight and a good minimum distance. This is interesting for designing
codes that are good for iterative decoding or LP decoding.
\end{itemize}
\end{remark}

\section{Case D}
\vspace{0.1in}
\begin{definition}{\rm A $(c,d)$-regular bipartite graph $G$ on $m$ left vertices and $n$ right vertices is a
{\em $(c,d,m,n,\mu)$ expander} if the second largest eigenvalue of $G$ (in absolute value) is $\mu$.}
\end{definition}

 Let a $(c,d)$-regular bipartite graph $G$ be an
$(c,d,m,n,\mu)$ expander.
An LDPC code is obtained from $G$ by interpreting the edges in $G$ as variable nodes,
the degree $c$ left vertices as sub-code constraints imposed by an $[c,r_1 c,\epsilon_1 c]$ linear block code, and
the degree $d$ vertices as constraint nodes imposing
constraints of an $[d,r_2d,\epsilon_2 d]$ linear block code. (See
Figure~\ref{caseD}.)
The resulting LDPC code has block length $N=mc=nd$ and rate $R\ge r_1+r_2-1$.

\begin{figure}
\centering{\resizebox{4in}{2.2in}{\includegraphics{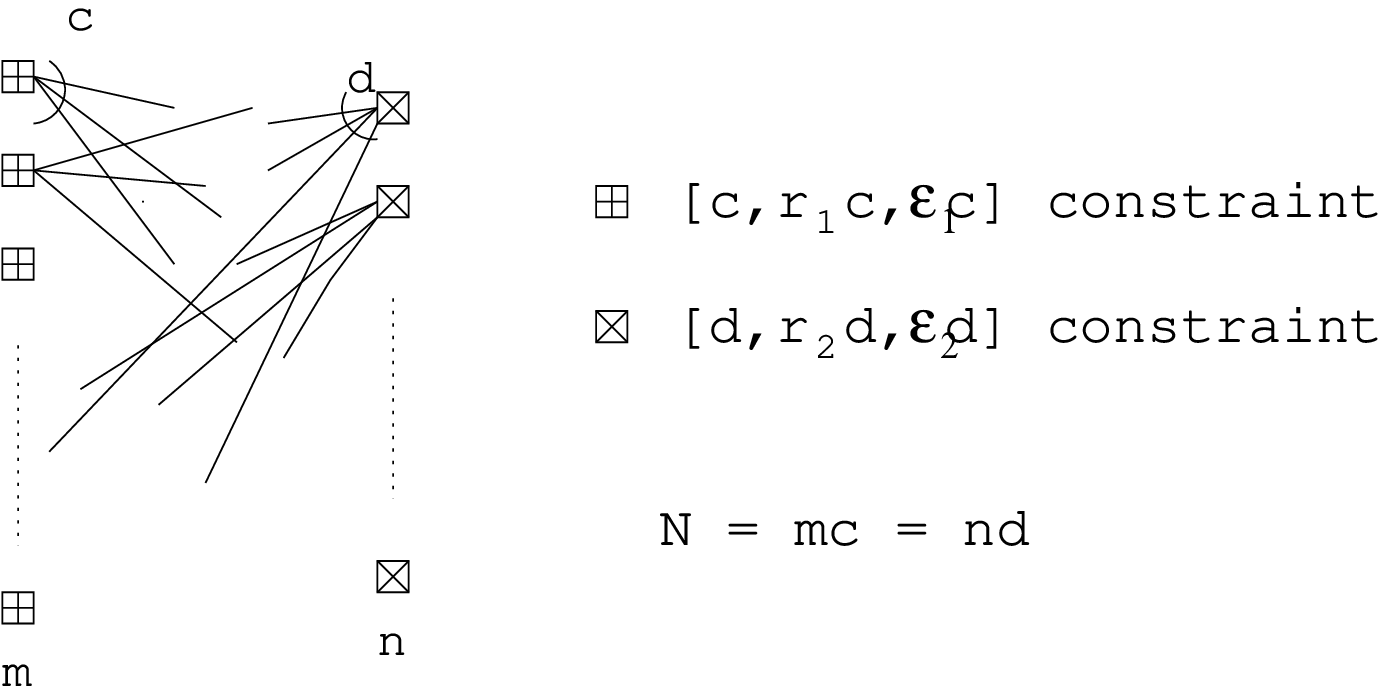}}}
\centerline{\scriptsize Edges: variable nodes, degree $c$ vertices: $[c,r_1c,\epsilon_1c]$ constraints, degree $d$ vertices: $[d,r_2d,\epsilon_2d]$ constraints.}
\caption{Expander code: Case D.}
\label{caseD}
\end{figure}

\vspace{0.1in}

We state a useful result by Janwa and Lal \cite{ja03} describing the edge-expansion of a regular bipartite graph $G$.
\begin{lemma}(Janwa-Lal, edge-expansion)
Let $G$ be a $(c,d)$-regular bipartite graph on $m$ vertices on the left and $n$ vertices on the right and let $\mu$ be its second largest eigenvalue.
If $S$ and $T$ are two subsets of the left and the right vertices, respectively, of $G$, then
the number of edges in the induced sub-graph of $S$ and $T$ in $G$ is at most
\[|E(S,T)|\le \frac{d}{m}|S||T|+\frac{\mu}{2}(|S|+|T|).\]
\label{janwa_lal_lemma}
\end{lemma}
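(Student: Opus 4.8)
The plan is to give a spectral proof, bounding the bilinear form that counts the edges between $S$ and $T$ by its ``expected'' value plus an error term governed by $\mu$. First I would write the biadjacency matrix $B$ of $G$, the $m\times n$ $0/1$ matrix with $B_{ij}=1$ exactly when the $i$-th left vertex is adjacent to the $j$-th right vertex, and record that $|E(S,T)| = \mathbf{1}_S^{T} B\, \mathbf{1}_T$, where $\mathbf{1}_S\in\{0,1\}^m$ and $\mathbf{1}_T\in\{0,1\}^n$ are the indicator vectors of $S$ and $T$. The full adjacency matrix is the symmetric block matrix with $B$ and $B^{T}$ off the diagonal, so its eigenvalues are $\pm$ the singular values of $B$ (together with extra zeros when $m\neq n$); in particular the second largest eigenvalue in absolute value, $\mu$, equals the second largest singular value of $B$.

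Next I would pin down the top singular vectors. Since every left vertex has degree $c$ and every right vertex has degree $d$, we have $B\,\mathbf{1}_n = c\,\mathbf{1}_m$ and $B^{T}\mathbf{1}_m = d\,\mathbf{1}_n$, so $\mathbf{1}_m/\sqrt{m}$ and $\mathbf{1}_n/\sqrt{n}$ are the top left and right singular vectors, with singular value $\sqrt{cd}$. I would then decompose $\mathbf{1}_S = \frac{|S|}{m}\mathbf{1}_m + \mathbf{f}$ and $\mathbf{1}_T = \frac{|T|}{n}\mathbf{1}_n + \mathbf{g}$ with $\mathbf{f}\perp\mathbf{1}_m$ and $\mathbf{g}\perp\mathbf{1}_n$. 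Expanding $\mathbf{1}_S^{T}B\,\mathbf{1}_T$ into four terms, the two cross terms vanish (because $B\,\mathbf{1}_n$ and $B^{T}\mathbf{1}_m$ are multiples of the all-ones vectors, which are orthogonal to $\mathbf{g}$ and $\mathbf{f}$ respectively), while the principal term equals $\frac{|S||T|}{mn}\,\mathbf{1}_m^{T}B\,\mathbf{1}_n = \frac{c|S||T|}{n}$, and the relation $mc=nd$ rewrites this as $\frac{d}{m}|S||T|$.

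The residual term is $\mathbf{f}^{T}B\,\mathbf{g}$. Since $\mathbf{f}$ and $\mathbf{g}$ are orthogonal to the top singular vectors, the variational characterization of singular values gives $|\mathbf{f}^{T}B\,\mathbf{g}| \le \mu\,\|\mathbf{f}\|\,\|\mathbf{g}\|$. A direct computation yields $\|\mathbf{f}\|^{2} = |S|\bigl(1-\tfrac{|S|}{m}\bigr) \le |S|$ and $\|\mathbf{g}\|^{2} = |T|\bigl(1-\tfrac{|T|}{n}\bigr) \le |T|$, so $\|\mathbf{f}\|\,\|\mathbf{g}\| \le \sqrt{|S||T|}$. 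Combining gives $|E(S,T)| \le \frac{d}{m}|S||T| + \mu\sqrt{|S||T|}$, and a final application of the AM--GM inequality $\sqrt{|S||T|}\le\frac{1}{2}(|S|+|T|)$ produces the claimed bound.

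The step I expect to need the most care is the spectral bookkeeping in the first two paragraphs --- namely justifying that the error term is controlled by exactly $\mu$ rather than by the larger value $\sqrt{cd}$. This hinges on correctly identifying the all-ones vectors as the top singular vectors of $B$, and on confirming that $\mu$, defined as the second-largest-in-modulus eigenvalue of the adjacency matrix, coincides with the second largest singular value of $B$; once this is established the remaining estimates are routine.
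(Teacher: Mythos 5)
Your argument is correct. Note, though, that the paper itself gives no proof of this lemma --- it is quoted verbatim from Janwa and Lal \cite{ja03} --- so there is no internal proof to compare against; what you have written is a self-contained derivation, namely the standard bipartite expander mixing lemma via the singular value decomposition of the biadjacency matrix $B$. All the steps check out: $\mathbf{1}_S^T B\,\mathbf{1}_T$ counts $|E(S,T)|$, the cross terms vanish because $B\mathbf{1}_n=c\mathbf{1}_m$ and $B^T\mathbf{1}_m=d\mathbf{1}_n$, the principal term is $\frac{|S||T|}{mn}\cdot mc=\frac{c}{n}|S||T|=\frac{d}{m}|S||T|$ using $mc=nd$, and the residual is controlled by $\mu\|\mathbf{f}\|\|\mathbf{g}\|\le\mu\sqrt{|S||T|}$. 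In fact you prove the strictly stronger bound $|E(S,T)|\le\frac{d}{m}|S||T|+\mu\sqrt{|S||T|}$ before relaxing by AM--GM to match the stated form. The one point you rightly flag as delicate is the meaning of $\mu$: since the adjacency spectrum of a bipartite graph is symmetric, $-\sqrt{cd}$ is also an eigenvalue of maximal modulus, so ``second largest in absolute value'' must be read (as is standard in this literature, and as the paper's applications require) as excluding both trivial eigenvalues $\pm\sqrt{cd}$; under that convention, and assuming $G$ is connected so that $\sqrt{cd}$ is a simple singular value with the all-ones singular vectors, your identification of $\mu$ with $\sigma_2(B)$ and the variational bound on $\mathbf{f}^T B\mathbf{g}$ are both justified. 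No gaps.
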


\subsection{Minimum distance}

\begin{lemma}\cite{ja03}  If $\epsilon_2 d\ge \epsilon_1 c>\mu/2$, the
LDPC code obtained from the $(c,d,m,n,\mu)$ expander graph
$G$ as above has minimum distance \[d_{\min}\ge N\left( \epsilon_1 \epsilon_2
-\frac{\mu}{2\sqrt{cd}}(\epsilon_1\sqrt{\frac{c}{d}}+\epsilon_2\sqrt{\frac{d}{c}})\right) .\]
\end{lemma}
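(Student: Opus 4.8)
The plan is to run the classical expander--code minimum-distance argument, feeding the subcode distances in at both the left and the right vertices while controlling the number of support edges with the Janwa--Lal edge-expansion bound (Lemma~\ref{janwa_lal_lemma}). Let $\mathbf{c}$ be a nonzero codeword and let $w$ be its Hamming weight, i.e., the number of edges of $G$ carrying a nonzero value. Let $S$ be the set of left (degree $c$) vertices incident with at least one support edge and let $T$ be the set of right (degree $d$) vertices incident with at least one support edge. Because the restriction of $\mathbf{c}$ to the edges meeting any $v\in S$ is a codeword of the $[c,r_1c,\epsilon_1 c]$ subcode and is nonzero, that vertex carries at least $\epsilon_1 c$ support edges; summing over $S$ gives $w\ge \epsilon_1 c\,|S|$, hence $|S|\le w/(\epsilon_1 c)$. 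The identical argument on the right with the $[d,r_2d,\epsilon_2 d]$ subcode gives $w\ge \epsilon_2 d\,|T|$, hence $|T|\le w/(\epsilon_2 d)$.

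Next I would observe that every support edge joins a vertex of $S$ to a vertex of $T$, so the support is contained in the induced bipartite subgraph on $(S,T)$ and therefore $w\le |E(S,T)|$. Applying Lemma~\ref{janwa_lal_lemma} and then the two upper bounds on $|S|$ and $|T|$ (the right-hand side of that lemma is monotone increasing in both $|S|$ and $|T|$, so the substitution preserves the inequality) yields the self-referential estimate
\[ w \le \frac{d}{m}\,\frac{w}{\epsilon_1 c}\,\frac{w}{\epsilon_2 d} + \frac{\mu}{2}\left(\frac{w}{\epsilon_1 c}+\frac{w}{\epsilon_2 d}\right). \]
Since $N=mc$, the quadratic coefficient $\frac{d}{m}\cdot\frac{1}{\epsilon_1\epsilon_2 cd}$ collapses to $\frac{1}{\epsilon_1\epsilon_2 N}$. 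Dividing through by $w>0$ (nonzeroness of the codeword is essential here) and rearranging isolates the weight, giving $w\ge \epsilon_1\epsilon_2 N\left(1-\frac{\mu}{2}\left(\frac{1}{\epsilon_1 c}+\frac{1}{\epsilon_2 d}\right)\right)$.

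It then remains to check that this matches the stated form. Distributing $\epsilon_1\epsilon_2$ over the bracket gives $N\left(\epsilon_1\epsilon_2-\frac{\mu}{2}\left(\frac{\epsilon_2}{c}+\frac{\epsilon_1}{d}\right)\right)$, and rewriting $\frac{\epsilon_2}{c}=\frac{1}{\sqrt{cd}}\,\epsilon_2\sqrt{d/c}$ and $\frac{\epsilon_1}{d}=\frac{1}{\sqrt{cd}}\,\epsilon_1\sqrt{c/d}$ reproduces the claimed expression exactly. I expect the main obstacle to be bookkeeping rather than any deep idea: one must orient the two counting inequalities correctly so that the \emph{lower} bounds on $w$ become \emph{upper} bounds on $|S|$ and $|T|$, which is what legitimizes substituting into the monotone Janwa--Lal bound, and one must carry out the quadratic-in-$w$ cancellation cleanly. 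The hypotheses $\epsilon_2 d\ge \epsilon_1 c>\mu/2$ do not enter the inequality chain itself; they serve only to place us in the regime where the bracketed factor is positive, so that the stated bound is nonvacuous rather than a trivially valid negative lower bound.
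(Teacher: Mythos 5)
Your proof is correct and is essentially the argument the paper relies on: the lemma itself is cited from Janwa--Lal, but the paper's own proof of the companion stopping-set bound in Case D runs the identical chain ($|S|\le |X|/(c\epsilon_1)$, $|T|\le |X|/(d\epsilon_2)$, substitute into the edge-expansion lemma, cancel the quadratic term), and your algebraic simplification to the stated $\sqrt{cd}$ form checks out. Your closing observation that the hypotheses only serve to make the bound nonvacuous likewise matches the paper's own remark on the stopping-set version.
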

\vspace{0.1in}

\subsection{Minimum stopping set size}
\vspace{0.1in}

We again use the generalized definition of stopping set in Definition \ref{sset_gen}. Under the assumption that the
$[d,r_2d,\epsilon_2 d]$ and $[c,r_1c,\epsilon_1 c]$ subcodes have no idle components, meaning that there are no components that are zero in all of the codewords
of either of the subcodes, Definition \ref{sset_gen} reduces to the following: {\em
A stopping set in a generalized LDPC code as in Case D is a set of variable nodes such that every
node that is a degree $c$ neighbor of some node $s\in S$ is connected to $S$ at least $\epsilon_1 c$ times
and every node that is a degree $d$ neighbor of some node $s\in S$ is connected to $S$ at least $\epsilon_2 d$ times.}
\vspace{0.1in}

\begin{lemma} The LDPC code
obtained from the $(c,d,m,n,\mu)$ expander graph
$G$ has a minimum stopping set size \[s_{\min}\ge N\left( \epsilon_1 \epsilon_2
-\frac{\mu}{2\sqrt{cd}}(\epsilon_1\sqrt{\frac{c}{d}}+\epsilon_2\sqrt{\frac{d}{c}})\right) .\]
\end{lemma}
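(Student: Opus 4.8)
The plan is to run the same kind of counting argument used for the earlier cases, but to replace the one-sided expansion estimate (Alon--Chung) by the two-sided edge-expansion bound of Janwa and Lal (Lemma~\ref{janwa_lal_lemma}), since in Case~D both the left and the right vertices carry non-trivial subcode constraints. I would take an arbitrary stopping set $S$, viewed as a set of edges of $G$, and bound $|S|$ from below directly. Let $A$ denote the set of degree-$c$ (left) constraint nodes incident with $S$ and let $B$ denote the set of degree-$d$ (right) constraint nodes incident with $S$. Using the reduced form of the stopping-set condition stated just before the lemma---every node in $A$ meets $S$ in at least $\epsilon_1 c$ edges and every node in $B$ meets $S$ in at least $\epsilon_2 d$ edges---a double count of the edges of $S$ from each side (each edge has exactly one endpoint in $A$ and one in $B$) gives $|S|\ge \epsilon_1 c\,|A|$ and $|S|\ge \epsilon_2 d\,|B|$, hence $|A|\le |S|/(\epsilon_1 c)$ and $|B|\le |S|/(\epsilon_2 d)$.

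Next I would feed $A$ and $B$ into Lemma~\ref{janwa_lal_lemma}. Since every edge of $S$ joins a vertex of $A$ to a vertex of $B$, we have $S\subseteq E(A,B)$ and therefore
\[
|S| \le |E(A,B)| \le \frac{d}{m}\,|A|\,|B| + \frac{\mu}{2}\bigl(|A| + |B|\bigr).
\]
The right-hand side is increasing in $|A|$ and in $|B|$, so I may substitute the upper bounds obtained above to get
\[
|S| \le \frac{d}{m}\cdot\frac{|S|}{\epsilon_1 c}\cdot\frac{|S|}{\epsilon_2 d} + \frac{\mu}{2}\left(\frac{|S|}{\epsilon_1 c} + \frac{|S|}{\epsilon_2 d}\right) = \frac{|S|^2}{m c\,\epsilon_1\epsilon_2} + \frac{\mu|S|}{2}\left(\frac{1}{\epsilon_1 c} + \frac{1}{\epsilon_2 d}\right),
\]
where the factor $d$ cancels in the first term. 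Dividing through by $|S|>0$ and rearranging isolates $|S|$, giving the linear lower bound $|S| \ge m c\,\epsilon_1\epsilon_2\bigl(1 - \tfrac{\mu}{2}(\tfrac{1}{\epsilon_1 c} + \tfrac{1}{\epsilon_2 d})\bigr)$.

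Finally I would tidy up the constants. Using $N = mc$, the previous estimate becomes $|S| \ge N\bigl(\epsilon_1\epsilon_2 - \tfrac{\mu}{2}(\tfrac{\epsilon_2}{c} + \tfrac{\epsilon_1}{d})\bigr)$, and a short computation shows $\tfrac{\mu}{2}(\tfrac{\epsilon_2}{c} + \tfrac{\epsilon_1}{d}) = \tfrac{\mu}{2\sqrt{cd}}(\epsilon_1\sqrt{c/d} + \epsilon_2\sqrt{d/c})$ after clearing $\sqrt{cd}$, reproducing the stated form of $s_{\min}$. Since any stopping set satisfies this inequality, it bounds $s_{\min}$ from below. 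The argument is essentially routine once the set-up is fixed; the only points needing care are (i) checking that the reduced combinatorial stopping-set condition really yields the two one-sided degree bounds $|S|\ge \epsilon_1 c\,|A|$ and $|S|\ge \epsilon_2 d\,|B|$---this is exactly where the no-idle-component hypothesis on the subcodes is used---and (ii) verifying the $\sqrt{cd}$ rewriting of the error term, which I expect to be the main, if minor, bookkeeping hurdle. One should also observe that the bound is informative only when the parenthetical factor is positive, i.e.\ under hypotheses analogous to $\epsilon_2 d\ge \epsilon_1 c>\mu/2$ assumed in the minimum-distance lemma.
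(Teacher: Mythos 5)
Your proposal is correct and follows essentially the same route as the paper: bound the sizes of the left and right neighbour sets of the stopping set via the degree conditions, feed those bounds into the Janwa--Lal edge-expansion lemma, and solve the resulting quadratic inequality for $|S|$, with the same algebraic simplification to the $\sqrt{cd}$ form. The only cosmetic difference is that you invoke the reduced combinatorial stopping-set condition directly, whereas the paper re-derives the ``at least $\epsilon_1 c$ (resp.\ $\epsilon_2 d$) times'' conditions from the pseudocodeword inequality in Equation~\ref{eqn_gen2}.
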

\vspace{0.1in}
Note that when $\min\{\epsilon_2 d, \epsilon_1 c\}> \mu$, the lower bound in the above is positive and meaningful.
\begin{proof}
Let $X$ be a stopping set corresponding to a subset of edges in $G$ and let $S$ and $T$ be the set of left and right neighbors, respectively,
of $X$ in $G$. Then $X$ is the support of some pseudocodeword ${\bf p}$ in $G$. Suppose there is some node in $S$
that is connected fewer that $c\epsilon_1$ times to the edges in $X$, then the inequality in Equation \ref{eqn_gen2} is violated
by the pseudocodeword components at that constraint node.
Similarly, if some node in $T$ is connected fewer than $d\epsilon_2$ times
to the edges in $X$, then the corresponding pseudocodeword components will not satisfy all the inequalities in Equation \ref{eqn_gen2}.
Thus, every node in $S$ is connected to $X$ at least $c\epsilon_1$
times and every node in $T$ is connected to $X$ at least $d\epsilon_2$ times. This means
$|S|\le \frac{|X|}{c\epsilon_1}$ and $|T|\le \frac{|X|}{d\epsilon_2}$. By Lemma \ref{janwa_lal_lemma},
we have \[ |X|\le |E(S,T)|\le \frac{d}{m}|S||T|+\frac{\mu}{2}(|S|+|T|).\]
This can be further bounded
as \[|X| \le \frac{d}{m}|S||T|+\frac{\mu}{2}(|S|+|T|)\le \frac{d}{m}\frac{|X|^2}{cd\epsilon_1\epsilon_2}+\frac{\mu}{2}(\frac{1}{c\epsilon_1}+\frac{1}{d\epsilon_2})|X|.\]
Simplifying, we obtain \[|X|\ge  mc\Big{(}\epsilon_1\epsilon_2-\frac{\mu}{2cd}(\epsilon_1c+\epsilon_2 d)\Big{)} = N\Big{(}\epsilon_1\epsilon_2-\frac{\mu}{2\sqrt{cd}}(\epsilon_1\sqrt{\frac{c}{d}}+\epsilon_2\sqrt{\frac{d}{c}})\Big{)}. \]
\end{proof}

\subsection{Minimum pseudocodeword weight}

\begin{theorem} If $\epsilon_2 d\ge \epsilon_1 c$, the LDPC code
obtained from the $(c,d,m,n,\mu)$ expander graph
$G$  has a minimum pseudocodeword weight \[w^{BSC}_{\min}\ge
N\frac{c}{d}\epsilon_1(\frac{\epsilon_1}{2}-\frac{\mu}{c}).\]
\label{case_d_thm}
\end{theorem}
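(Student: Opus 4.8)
The plan is to mirror the matching argument of Theorem \ref{case_c_thm} (Case C), with the Janwa--Lal edge-expansion lemma (Lemma \ref{janwa_lal_lemma}) taking over the role played there by Alon--Chung, exactly as the Case D stopping-set lemma replaced its Case C counterpart. Let $\mathbf p=(p_1,\dots,p_N)$ be a pseudocodeword of $G$, ordered so that $p_1\ge p_2\ge\cdots\ge p_N$, and let $e$ be the smallest index with $p_1+\cdots+p_e>p_{e+1}+\cdots+p_N$. Write $X_e$ for the set of edges of $G$ carrying the $e$ largest entries, viewed as variable nodes of the Tanner graph $G'$ that puts a variable node on each edge and a check on each left and right vertex. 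As in the proof of Theorem \ref{case_c_thm}, it will suffice to show $w_{BSC}(\mathbf p)\ge 2|X_e|$, i.e. to bound $|X_e|=e$ from below; concretely, I will show that if $|X_e|$ is small then the $e$ largest entries cannot dominate, contradicting the choice of $e$.

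Because $\epsilon_2 d\ge\epsilon_1 c$, the degree-$c$ checks are the binding ones, and I would construct a two-sided ``$\epsilon$-matching'' $M$ in $G'$: every variable node of $X_e$ meets at least one edge of $M$, every degree-$c$ check meets at most $\lfloor\epsilon_1 c/4\rfloor$ edges of $M$, and every degree-$d$ check at most $\lfloor\epsilon_2 d/4\rfloor$. Granting $M$, I apply inequality (\ref{eqn_gen3}) at each check met by $M$, placing the matched components on its left-hand side; summing over these checks produces $3\sum_{i\in X_e}p_i$ on the left. On the right, each node of $X_e$ reappears through its one remaining endpoint at most once, and each node outside $X_e$ through both endpoints at most twice, so the sum collapses to $3\sum_{i\in X_e}p_i\le\sum_{i\in X_e}p_i+2\sum_{i\notin X_e}p_i$, that is $\sum_{i\in X_e}p_i\le\sum_{i\notin X_e}p_i$, word-for-word as in Theorem \ref{case_c_thm}. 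This contradicts the defining property of $e$, so such an $M$ can exist only when $|X_e|$ is large, which is exactly what yields the weight bound.

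Both the existence of $M$ and the resulting threshold on $|X_e|$ are where Lemma \ref{janwa_lal_lemma} enters. I would produce $M$ by the maximum-matching/alternating-path argument of Case C: a deficiency would furnish sets $S\subseteq\Gamma_{\mathrm{left}}(X_e)$ and $T\subseteq\Gamma_{\mathrm{right}}(X_e)$ spanning an edge set $X'$ with $\tfrac{\epsilon_1 c}{4}|S|+\tfrac{\epsilon_2 d}{4}|T|<|X'|$, contradicting $|X'|\le\frac{d}{m}|S||T|+\frac{\mu}{2}(|S|+|T|)$ from Lemma \ref{janwa_lal_lemma} once $|X_e|$ is small. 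Quantitatively this is the weight-analogue of the Case D stopping-set computation, with the connectivity thresholds $\epsilon_1 c,\epsilon_2 d$ of that proof halved to $\epsilon_1 c/2,\epsilon_2 d/2$ (the net effect of the factor $3$ in (\ref{eqn_gen3})); this gives $|S|\le 2|X_e|/(\epsilon_1 c)$ and $|T|\le 2|X_e|/(\epsilon_2 d)$, and substituting into Lemma \ref{janwa_lal_lemma} yields $|X_e|\ge\frac N4\bigl(\epsilon_1\epsilon_2-\tfrac{\mu\epsilon_2}{c}-\tfrac{\mu\epsilon_1}{d}\bigr)$. Since this expression is increasing in $\epsilon_2$ whenever $\epsilon_1 c>\mu$, inserting the smallest admissible value $\epsilon_2 d=\epsilon_1 c$ gives $w_{BSC}(\mathbf p)\ge 2|X_e|\ge N\tfrac{c}{d}\epsilon_1\bigl(\tfrac{\epsilon_1}{2}-\tfrac{\mu}{c}\bigr)$, the claimed bound.

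The hard part will be the passage through Lemma \ref{janwa_lal_lemma}. Unlike the single-parameter ($\gamma$) estimate of Case C, the edge-expansion here couples two neighbor sets $S$ and $T$ through the bilinear term $\frac{d}{m}|S||T|$, so verifying the Hall/deficiency condition and extracting exactly the halved thresholds $\epsilon_1 c/2,\epsilon_2 d/2$ is a genuine two-parameter optimization rather than a one-variable argument. I would also want to check carefully that replacing $\epsilon_2$ by its minimal value $\epsilon_1 c/d$ at the end is legitimate, i.e. that the two-sided estimate is monotone in $\epsilon_2$ under the standing hypothesis $\epsilon_1 c>\mu$; that monotonicity is precisely what collapses the naturally symmetric two-sided bound to the one-sided statement actually asserted in the theorem.
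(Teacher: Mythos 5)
Your overall architecture (an $\epsilon$-matching in the edge--vertex incidence graph $G'$, inequality (\ref{eqn_gen3}) summed over the matched checks to get $3\sum_{i\in X_e}p_i\le\sum_{i\in X_e}p_i+2\sum_{i\notin X_e}p_i$, with Lemma \ref{janwa_lal_lemma} supplying the expansion input) is exactly the paper's, but the step where you make the matching and the expansion interact is where the proof actually lives, and your version of it does not go through. First, the bounds $|S|\le 2|X_e|/(\epsilon_1 c)$ and $|T|\le 2|X_e|/(\epsilon_2 d)$ that you substitute into Lemma \ref{janwa_lal_lemma} are unjustified: they would require every check neighbor of $X_e$ to meet $X_e$ with multiplicity at least $\epsilon_1 c/2$ (resp.\ $\epsilon_2 d/2$). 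That multiplicity guarantee is what powers the \emph{stopping-set} lemma, where $X$ is the full support of a pseudocodeword; the set $X_e$ here is only the top-$e$ support and a neighboring check may meet it just once. What the matching actually gives is $|M|\ge|X_e|$ together with $|M|\le\frac{\epsilon_1 c}{4}|S|+\frac{\epsilon_2 d}{4}|T|$, i.e.\ a lower bound on a weighted sum of $|S|$ and $|T|$, not upper bounds on each separately. Second, the two-sided deficiency condition $\frac{\epsilon_1 c}{4}|S_X|+\frac{\epsilon_2 d}{4}|T_X|\ge|X|$ that your Hall argument needs is precisely the ``genuine two-parameter optimization'' you defer; it is not a routine check, and the tell is that your intermediate bound $|X_e|\ge\frac N4\bigl(\epsilon_1\epsilon_2-\frac{\mu\epsilon_2}{c}-\frac{\mu\epsilon_1}{d}\bigr)$, doubled, is exactly the bound the authors state as Conjecture 1 and explicitly leave open. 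A proof that passes through an unproved strengthening of the theorem is a proof with a hole in it. (There is also a small logical inversion in your second paragraph: the matching exists when $|X_e|$ is \emph{small} -- that is when the Hall condition holds -- and its existence is what contradicts the domination of the top-$e$ components; it is not that ``$M$ can exist only when $|X_e|$ is large.'')

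The paper sidesteps both difficulties by deliberately using only the weaker subcode. Since $\epsilon_2 d\ge\epsilon_1 c$, one may impose the \emph{single} cap $c\epsilon_1/4$ on every check in $S\cup T$ (this is still a legitimate left-hand side for (\ref{eqn_gen3}) at a degree-$d$ check because $c\epsilon_1/4\le d\epsilon_2/4$). The Hall condition then collapses to the one-parameter statement $|S_X|+|T_X|\ge\frac{4}{c\epsilon_1}|X|$, which follows from Lemma \ref{janwa_lal_lemma} combined with the elementary estimate $|S||T|\le(|S|+|T|)^2/4$ exactly as in the paper's ``Case 1'' computation, after which the alternating-path argument of Theorem \ref{case_c_thm} applies verbatim. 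This is also why the theorem's bound involves only $\epsilon_1$ and is admittedly weaker than the stopping-set bound -- a price your route tries to avoid paying, but without the machinery to do so. If you want to salvage your two-sided approach, the concrete task is to prove the weighted deficiency inequality above directly from Lemma \ref{janwa_lal_lemma} (an AM--GM argument on $(\frac{\epsilon_1c}{4}-\frac\mu2)|S|+(\frac{\epsilon_2d}{4}-\frac\mu2)|T|$ versus $\frac dm|S||T|$ is the natural attack); until that is done, you should fall back on the single-threshold matching.
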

Note that the above lower bound is positive and meaningful when $\epsilon_1 c>2\mu$.

\begin{proof}
Let ${\bf p}=(p_1,p_2,\dots,p_N)$ be a pseudocodeword. Without loss of generality,
let us assume that $p_1 \ge p_2 \ge \cdots \ge p_N$. Let $e$ be the smallest
number such that
\begin{equation}
p_1+p_2\dots+p_e\ge p_{e+1}+\dots+p_N. \ \
\label{eqn_***3}
\end{equation}

Let $X_e$ be the set of edges in $G$ that correspond to the support of
the $e$ largest components of ${\bf p}$. Now we define a set $S$ as the
set of left neighbors (degree $c$ neighbors) to the edges in $X_e$, and
similarly define a set $T$ as the set of right neighbors (degree $d$
neighbors) to $X_e$. The $(c,d)$-regular graph $G$ can be transformed to
a graph $G'$ by representing every edge in $G$ by a vertex (called a
left-vertex) in $G'$, every vertex of degree c in $G$ by a vertex (called
a right-left vertex) in $G'$, every vertex of degree $d$ in $G$ by a
vertex (called a right-right vertex) in $G'$ and by connecting the edges
from the left vertices to the right-left and right-right vertices in $G'$
in a natural way. The left vertices have degree two and they represent
variable nodes of the LDPC code $C$, whereas the right-left vertices have
degree $c$ and represent $[c,r_1c,\epsilon_1 c]$-subcode constraints and
the right-right vertices have degree $d$ and represent
$[d,r_2d,\epsilon_2 d]$-subcode constraints. Note that $\Gamma(X_e)=S\cup
T$ in $G'$.

Let $|X_e|\le
N\frac{c}{2d}\epsilon_1(\frac{\epsilon_1}{2}-\frac{\mu}{c}).$ Now
let us consider two cases.

\underline{Case 1:} Suppose $|S\cup T|=|S|+|T| <
\frac{4}{c\epsilon_1}|X_e|$. Then, Since $G$ is a $(c,d,m,n,\mu)$
graph, we have
\[|X_e|\le \frac{d}{m}|S||T|+\frac{\mu}{2}(|S|+|T|)\]
Note that $|S||T|\le \frac{(|S|+|T|)^2}{4}$.  Hence, we have
\[|X_e|< \frac{d}{m}
\frac{16|X_e|^2}{4c^2\epsilon_1^2}+\frac{\mu}{2}\frac{4|X_e|}{c\epsilon_1}\]
On simplifying, the above yields
\[ |X_e|> N\frac{c}{2d}\epsilon_1(\frac{\epsilon_1}{2}-\frac{\mu}{c}).\]
This inequality contradicts the assumption on the size of $X_e$.

\underline{Case 2:} Suppose $|S\cup T| \ge
\frac{4}{c\epsilon_1}|X_e| $. Then we claim that there is a set of
edges $M$ in $G'$ called an $\epsilon_1$-matching such that (i)
every vertex in $X_e$ in the graph $G'$ is incident with at least
one edge from $M$ and (ii) every vertex in $S\cup T$ in the graph
$G'$ is incident with at most $c\epsilon_1/4$ edges from $M$.

The rest of the proof is similar to that for Theorem 3. Given the
claim, it is easy to show that by applying the pseudocodeword
inequality from equation \ref{eqn_gen3} at all the nodes in $S\cup
T$ that are incident with edges from $M$ and summing them, we can
arrive at at an inequality of the form
\[\sum_{i\in X_e} p_i \le \sum_{i \notin X_e} p_i.\] This will
prove that the weight of ${\bf p}$ is $w({\bf p})\ge 2|X_e|$
implying that the minimum pseudocodeword weight is
\[w_{\min}^{BSC}\ge 2|X_e| \] Hence
\[w_{\min}^{BSC}\ge
2N\frac{c}{2d}\epsilon_1(\frac{\epsilon_1}{2}-\frac{\mu}{c})=N\frac{c}{d}\epsilon_1(\frac{\epsilon_1}{2}-\frac{\mu}{c}).\]

The proof for the matching also follows the same arguments as that
in Theorem 3. We derive the condition for proving the matching as
follows: For any subset $X\subseteq X_e$, let $S_X$ be the set of
right-left neighbors of $X$ in $G'$ and let $T_X$ be the set of
right-right neighbors of $X$ in $G'$. Note that $|S_X\cup
T_X|=|S_X|+|T_X| \ge \frac{4}{c\epsilon_1}|X|$. Otherwise, using the
argument in case 1 and the fact that $G$ is a $(c,d,m,n,\mu)$
expander, it can be shown that $|X|>
N\frac{c}{2d}\epsilon_1(\frac{\epsilon_1}{2}-\frac{\mu}{c}) \ge
|X_e|$, which is a contradiction. Thus, for any subset $X\subseteq
X_e$, we have
\begin{equation}
|S_X\cup T_X|\ge \frac{4}{c\epsilon_1}|X|. \label{eqn_CaseD_new}
\end{equation}
The rest of the proof is similar to that in Theorem 3.
\end{proof}

\vspace{0.1in}

\begin{remark}

\begin{itemize}
\item Note that  if $c\epsilon_1\ge d\epsilon_2\ge
2\mu$, then it can be shown using a similar proof as in Theorem 4 that
$w_{\min}^{BSC}\ge N\frac{d}{c}\epsilon_2(\frac{\epsilon_2}{2}-\frac{\mu}{d})$.
\item  Observe that the lower bound on the minimum pseudocodeword weight
is slightly weaker compared to the lower bound on the
minimum distance and the minimum stopping set size, since the proof in Theorem 4 exploits the strength of only one the
subcodes -- namely, the subcode with the smaller distance. We however believe that this can be improved to give a much stronger
result as stated below.
\item Note that in the case where $c=d$, $m=n$, and $\epsilon_1=\epsilon_2=\epsilon$, the result in Theorem 4 closely resembles
the result in Theorem 3 and is almost equal to the lower bounds on the minimum distance and the minimum stopping set size.
\item The lower bound suggests that if one were to use good expanding graphs such as the bipartite Ramanujan graphs from the construction
in \cite{lu88a} and choose an appropriate choice of subcodes having minimum distance at least twice the second eigenvalue of the expander
then the resulting code will have a good pseudocodeword weight and a good minimum distance. Once again, this is interesting for designing
codes that are good for iterative decoding or LP decoding. Furthermore, with different choices of $c$ and $d$, there is greater
flexibility in the designing good codes using the construction in Case D than that in Case C.
\end{itemize}
\end{remark}

We believe that Theorem 4 can be improved to a stronger result as follows:
\begin{conjecture}
 If $\epsilon_2 d\ge \epsilon_1 c>2\mu$, the LDPC code
obtained from the $(c,d,m,n,\mu)$ expander graph
$G$  has a minimum pseudocodeword weight \[w^{BSC}_{\min}\ge
N(\frac{\epsilon_1\epsilon_2}{2}-\frac{\mu}{2\sqrt{cd}}(\epsilon_1\sqrt{\frac{c}{d}}+\epsilon_2\sqrt{\frac{d}{c}})).\]
\end{conjecture}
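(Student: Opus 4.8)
The plan is to follow the architecture of Theorems \ref{case_c_thm} and \ref{case_d_thm}, but to exploit \emph{both} subcode constraints simultaneously rather than only the weaker one. As before, order $p_1\ge p_2\ge\cdots\ge p_N$, let $e$ be the smallest index with $\sum_{i\le e}p_i\ge\sum_{i>e}p_i$, let $X_e$ be the set of edges of $G$ supporting the $e$ largest components, and pass to the auxiliary graph $G'$ used in Theorem \ref{case_d_thm}, with left neighbours $S$ (the degree-$c$ constraints) and right neighbours $T$ (the degree-$d$ constraints). By Definition \ref{bsc_pscw_wt} it suffices to exhibit, for $X_e$ as large as $\tfrac12 N\big(\tfrac{\epsilon_1\epsilon_2}{2}-\tfrac{\mu}{2\sqrt{cd}}(\epsilon_1\sqrt{c/d}+\epsilon_2\sqrt{d/c})\big)$, the inequality $\sum_{i\in X_e}p_i\le\sum_{i\notin X_e}p_i$, for then $e\ge|X_e|$ and $w^{BSC}({\bf p})\ge 2|X_e|$, which is exactly the claimed bound.

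The mechanism I would use is a \emph{two-sided matching}. Each variable node $v\in X_e$ has exactly two neighbours in $G'$, one in $S$ and one in $T$, and I would seek an assignment $M$ sending each $v\in X_e$ to exactly one of these so that every node of $S$ receives at most $\lfloor c\epsilon_1/4\rfloor$ nodes and every node of $T$ at most $\lfloor d\epsilon_2/4\rfloor$. Granting $M$, apply inequality (\ref{eqn_gen3}) at each active constraint, placing on its left-hand side precisely the variable nodes assigned to it. Summing over all active constraints yields $3\sum_{i\in X_e}p_i$ on the left, while on the right each $p_i$ with $i\in X_e$ appears at most once (at its \emph{other} constraint) and each $p_i$ with $i\notin X_e$ at most twice. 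This is the same counting as in Theorem \ref{case_c_thm} and gives $3\sum_{X_e}p_i\le\sum_{X_e}p_i+2\sum_{\notin X_e}p_i$, i.e. $\sum_{X_e}p_i\le\sum_{\notin X_e}p_i$. The two-sided nature of $M$ is what couples $\epsilon_1$ and $\epsilon_2$ and should account for the product $\epsilon_1\epsilon_2$ in the target, improving on Theorem \ref{case_d_thm}, which routes everything through a single side.

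Existence of $M$ reduces, by the max-flow/min-cut (defect Hall) criterion, to the requirement that for every $A\subseteq X_e$ with left-neighbours $S_A$ and right-neighbours $T_A$, one has $\lfloor c\epsilon_1/4\rfloor|S_A|+\lfloor d\epsilon_2/4\rfloor|T_A|\ge|A|$. Here I would invoke the edge-expansion Lemma \ref{janwa_lal_lemma}, which bounds $|A|\le|E(S_A,T_A)|\le\frac{d}{m}|S_A||T_A|+\frac{\mu}{2}(|S_A|+|T_A|)$, exactly as in the stopping-set and minimum-distance arguments. Writing $s=|S_A|$, $t=|T_A|$, a sufficient condition for the Hall inequality is
\[
s\Big(\tfrac{c\epsilon_1}{4}-\tfrac{\mu}{2}\Big)+t\Big(\tfrac{d\epsilon_2}{4}-\tfrac{\mu}{2}\Big)\ \ge\ \tfrac{d}{m}\,s\,t,
\]
whose left side is positive precisely because the hypotheses give $c\epsilon_1>2\mu$ and $d\epsilon_2\ge c\epsilon_1>2\mu$. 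For sparse (matching-like) $A$, where $s,t$ are comparable to $|A|$, this holds comfortably since $c\epsilon_1+d\epsilon_2\gg4$, and for very small clusters the quadratic term $\frac{d}{m}st$ is negligible.

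The hard part is the uniform verification of this condition over \emph{all} subsets $A\subseteq X_e$. Unlike the stopping-set proof, where every constraint meeting the support is hit at least $\epsilon_1 c$ (resp. $\epsilon_2 d$) times, so that $|S|\le|X|/(c\epsilon_1)$ and $|T|\le|X|/(d\epsilon_2)$ pin down the neighbour sizes, a pseudocodeword's dominant set $X_e$ may meet a given constraint as few as once. Consequently I have no a priori upper control on $|S_A|,|T_A|$, and for intermediate-size dense subsets the quadratic mixing term $\frac{d}{m}st$ can overtake the linear matching capacity, so the displayed sufficient condition can fail even when the true Hall inequality might still hold. Closing this gap is the crux: I expect it requires either a sharpened, weighted form of Lemma \ref{janwa_lal_lemma} that delivers the vertex-expansion $\lfloor c\epsilon_1/4\rfloor|S_A|+\lfloor d\epsilon_2/4\rfloor|T_A|\ge|A|$ directly for every $A\subseteq X_e$, or an LP-duality argument that replaces the integral matching by an optimal fractional flow and analyses the boundary curve $\frac{d}{m}st+\frac{\mu}{2}(s+t)=|A|$, where the conditions $\epsilon_1 c>2\mu$ and $\epsilon_2 d\ge\epsilon_1 c$ should be exactly what forces the minimum of $\frac{c\epsilon_1}{4}s+\frac{d\epsilon_2}{4}t$ above $|A|$. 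The difficulty of carrying this out uniformly is, I believe, the reason the statement remains a conjecture rather than a theorem.
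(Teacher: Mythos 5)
This statement is presented in the paper as a \emph{conjecture}: the authors give no proof, only the remark that they believe Theorem~\ref{case_d_thm} can be strengthened to this form. So there is no argument of theirs to compare against, and your attempt, as you yourself concede in your final paragraph, does not close the gap either. Your architecture is the natural candidate: a two-sided, capacity-constrained assignment sending each edge of $X_e$ to one of its two endpoints, with capacity $\lfloor c\epsilon_1/4\rfloor$ on the $S$-side and $\lfloor d\epsilon_2/4\rfloor$ on the $T$-side, followed by summation of inequality~(\ref{eqn_gen3}) exactly as in Theorem~\ref{case_c_thm}. Indeed, if one could establish the neighbourhood bounds $|S|\le 2|X_e|/(c\epsilon_1)$ and $|T|\le 2|X_e|/(d\epsilon_2)$, then Lemma~\ref{janwa_lal_lemma} would reproduce precisely the conjectured constant $N\bigl(\tfrac{\epsilon_1\epsilon_2}{2}-\tfrac{\mu}{2}(\tfrac{\epsilon_1}{d}+\tfrac{\epsilon_2}{c})\bigr)$, i.e.\ the stopping-set computation with each $\epsilon_i$ halved; the conjectured bound is evidently reverse-engineered from exactly that arithmetic.

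The gap you name is the genuine one, and it is structural rather than technical. The defect-Hall condition $\lfloor c\epsilon_1/4\rfloor|S_A|+\lfloor d\epsilon_2/4\rfloor|T_A|\ge|A|$ must be verified for \emph{every} $A\subseteq X_e$, but Lemma~\ref{janwa_lal_lemma} only caps $|A|$ from above by $\tfrac{d}{m}|S_A||T_A|+\tfrac{\mu}{2}(|S_A|+|T_A|)$. For a stopping set, every active constraint meets the support at least $\epsilon_1 c$ (resp.\ $\epsilon_2 d$) times, so $|S_A|$ and $|T_A|$ are linearly bounded in $|A|$ with small constants and the quadratic term stays subordinate; for the dominant support of a pseudocodeword no such multiplicity lower bound exists, $|S_A|$ and $|T_A|$ can each be as large as $|A|$, and the sufficient condition $s(\tfrac{c\epsilon_1}{4}-\tfrac{\mu}{2})+t(\tfrac{d\epsilon_2}{4}-\tfrac{\mu}{2})\ge\tfrac{d}{m}st$ can then fail (e.g.\ $t$ near $n=mc/d$ makes $\tfrac{d}{m}st$ comparable to $cs$, which dwarfs $\tfrac{c\epsilon_1}{4}s$ since $\epsilon_1\le 1$), even though the true Hall condition might still hold. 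This is exactly why the authors prove Theorem~\ref{case_d_thm} by routing all of $X_e$ through a single side, exploiting only the weaker subcode, and leave the two-sided version open. Your write-up is a faithful reconstruction of the obstruction and a reasonable programme for attacking it, but it is not a proof and should not be presented as one.
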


\section{A parity-oriented lower bound}
\begin{definition}
The weight of a pseudocodeword ${\bf q}=(q_1,q_2,\dots,q_n)$ of an LDPC constraint graph $G$ on the AWGN channel is defined as \cite{fo01in,wi96t} \[w^{AWGN}({\bf q})=\frac{(\sum_{i=1}^n q_i)^2}{(\sum_{i=1}^n q_i^2)}.\]
\end{definition}
\vspace{0.1in}

The following bound on the minimum pseudocodeword weight on the AWGN
channel is an adaptation of Tanner's parity-oriented lower bound on the
minimum distance \cite{ta01}. Further, this bound complements the
bit-oriented bound obtained by Vontobel and Koetter \cite{vo04} which is
also a lower bound on the minimum pseudocodeword weight in terms of the
eigenvalues of the adjacency matrix of $G$, obtained using a slightly
different argument. \vspace{0.1in}

\begin{theorem} Let $G$ be a connected $(j,m)$-regular bipartite graph representing an LDPC code with an $r\times n$ parity check matrix $H$.
Then the minimum pseudocodeword weight on the AWGN channel is lower
bounded as
\[ w_{\min}^{AWGN} \ge \frac{n(4j - \mu_2 m)}{(\mu_1 - \mu_2)m} , \]
where $\mu_1=jm$ and $\mu_2$ are the largest and second-largest eigenvalues (in absolute value), respectively, of $HH^T$.
\label{parity_oriented_bnd}
\end{theorem}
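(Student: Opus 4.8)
The plan is to take an arbitrary pseudocodeword $\mathbf{p}=(p_1,\dots,p_n)$ of $G$, set $W=\sum_{i=1}^n p_i$ and $Q=\sum_{i=1}^n p_i^2$ so that $w^{AWGN}(\mathbf{p})=W^2/Q$, and then to sandwich the single scalar $\|H\mathbf{p}\|^2=\mathbf{p}^TH^TH\mathbf{p}$ between a combinatorial lower bound coming from the pseudocodeword inequalities and a spectral upper bound coming from the eigenvalues of $H^TH$. Rearranging the resulting double inequality will isolate $W^2/Q$ and produce exactly the stated bound. Throughout I would use that $p_i\ge 0$ (equation (\ref{eqn_zero})) and, at each check node $a$ of degree $m$ with neighbour set $N(a)$, the basic inequality (\ref{eqn_basic}), which gives $2p_i\le\sum_{i'\in N(a)}p_{i'}=(H\mathbf{p})_a$ for every variable node $i\in N(a)$.

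For the lower bound I would argue check node by check node. Fix a check $a$; since $(H\mathbf{p})_a\ge 2p_i$ for each neighbour $i$, we have $(H\mathbf{p})_a\ge 2\max_{i\in N(a)}p_i$, hence $\big((H\mathbf{p})_a\big)^2\ge 4\max_{i\in N(a)}p_i^2\ge \frac{4}{m}\sum_{i\in N(a)}p_i^2$, the last step using that the maximum dominates the average over the $m$ neighbours. Summing over all $r$ checks and using $j$-regularity on the variable side (each $p_i^2$ is counted $j$ times) gives
\[
\|H\mathbf{p}\|^2=\sum_{a}\big((H\mathbf{p})_a\big)^2\ \ge\ \frac{4}{m}\sum_{a}\sum_{i\in N(a)}p_i^2\ =\ \frac{4j}{m}\,Q .
\]

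For the upper bound I would pass to the spectral decomposition of the symmetric positive semidefinite matrix $H^TH$. One checks that $H^TH\mathbf{1}_n=jm\,\mathbf{1}_n$, so the all-ones vector is an eigenvector with eigenvalue $\mu_1=jm$; connectedness of $G$ makes this the simple Perron eigenvalue, and because $H^TH$ and $HH^T$ share the same nonzero spectrum and are both positive semidefinite, the second-largest eigenvalue of $H^TH$ in absolute value is exactly the $\mu_2$ of the statement. Writing $\mathbf{p}$ in an orthonormal eigenbasis with $\mathbf{1}_n/\sqrt{n}$ as the top vector, the component along $\mathbf{1}_n$ contributes $\mu_1 W^2/n$ and all remaining components contribute at most $\mu_2\big(Q-W^2/n\big)$, so
\[
\|H\mathbf{p}\|^2=\mathbf{p}^TH^TH\mathbf{p}\ \le\ (\mu_1-\mu_2)\frac{W^2}{n}+\mu_2\,Q .
\]
Chaining the two displays gives $\frac{4j}{m}Q\le(\mu_1-\mu_2)\frac{W^2}{n}+\mu_2 Q$; collecting the $Q$ terms yields $\big(\frac{4j-\mu_2 m}{m}\big)Q\le(\mu_1-\mu_2)\frac{W^2}{n}$, and dividing through (valid and meaningful precisely when $4j-\mu_2 m>0$) produces $W^2/Q\ge \frac{n(4j-\mu_2 m)}{(\mu_1-\mu_2)m}$, i.e. the claimed bound on $w_{\min}^{AWGN}$.

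The step I expect to be most delicate is not the algebra but the bookkeeping around which Gram matrix carries the spectrum. The combinatorial lower bound naturally produces $\|H\mathbf{p}\|^2=\mathbf{p}^TH^TH\mathbf{p}$, an $n\times n$ quadratic form, whereas $\mu_2$ is defined through the $r\times r$ matrix $HH^T$; I must justify that the extra $n-r$ zero eigenvalues of $H^TH$ do not interfere (they do not, since $HH^T$ is positive semidefinite, so $\mu_2\ge 0$) and that $\mathbf{1}_n$ is genuinely the unique top eigenvector, which is where connectedness of $G$ is used. The other point requiring care is the constant: the pseudocodeword inequality only supplies the factor $2$ (hence $4$ after squaring), so the $\max\ge\text{average}$ estimate must be applied exactly as above to recover the numerator $4j$, which is the pseudocodeword analogue of the factor $2j$ appearing in Tanner's minimum-distance bound.
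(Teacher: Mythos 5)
Your argument is correct and delivers exactly the stated constant, but it pivots on a different quadratic form than the paper does. The paper sets ${\bf p}=H{\bf q}\in\mathbb{R}^r$ and sandwiches $\parallel H^T{\bf p}\parallel^2={\bf q}^T(H^TH)^2{\bf q}$: the upper bound comes from expanding ${\bf p}$ in the eigenbasis of the $r\times r$ matrix $HH^T$ (which then requires the extra Cauchy--Schwarz step $\parallel H{\bf q}\parallel^2\le mj\sum_i q_i^2$ and the computation $\parallel{\bf p}_1\parallel^2=\frac{j^2}{r}(\sum_i q_i)^2$), and the lower bound from $(H^TH{\bf q})_t\ge 2jq_t$, i.e.\ the pseudocodeword inequality applied once per variable--check incidence, followed by the substitution $nj=rm$. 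You instead sandwich the simpler form $\parallel H{\bf q}\parallel^2={\bf q}^TH^TH{\bf q}$ directly, expanding ${\bf q}$ in the eigenbasis of the $n\times n$ matrix $H^TH$, and your lower bound is obtained per check node via the inequality $((H{\bf q})_a)^2\ge 4\max_{i\in N(a)}q_i^2\ge\frac{4}{m}\sum_{i\in N(a)}q_i^2$. The two routes land on the identical bound because the nonzero spectra of $H^TH$ and $HH^T$ coincide (your bookkeeping on this point, and on $\mathbf{1}_n$ being the simple Perron eigenvector of $H^TH$ by connectedness, is exactly right and is precisely where the proofs diverge); yours is a bit more economical in that it avoids the Cauchy--Schwarz step and the $nj=rm$ conversion, though it is arguably closer in spirit to the bit-oriented Vontobel--Koetter bound the paper contrasts with, whereas the choice ${\bf p}=H{\bf q}$ is what makes the paper's version ``parity-oriented.'' One small remark: the final division is valid regardless of the sign of $4j-\mu_2 m$ (only $Q>0$ and $\mu_1>\mu_2$ are needed); positivity of $4j-\mu_2 m$ is required only for the bound to be nonvacuous.
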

\vspace{0.1in}
\begin{proof}

Let ${\bf q}=(q_1, \ldots,q_n)$ be a pseudocodeword of $G$, and let ${\bf p} = H{\bf q}$
be a  real-valued vector of length
$r$. The first eigenvector of $HH^T$ is ${\bf e}_1 = (1,1,\ldots,1)^T/ \sqrt{r}$. Let ${\bf p}_i$ be the projection
of ${\bf p}$ onto the $i$th eigenspace. We will now upper bound $\parallel H^T{\bf p} \parallel ^2$.
Converting $\parallel H^T{\bf p}\parallel ^2$ into eigenspace representation, we get
\[ \parallel H^T{\bf p}\parallel ^2 = \sum_{i = 1}^r \mu_i \parallel{\bf p}_i \parallel ^2 = \mu_1 \parallel {\bf
p}_1 \parallel ^2 + \sum_{i = 2}^r \mu_i \parallel{\bf p}_i \parallel ^2\]
\[ \le \mu_1 \parallel{\bf p}_1 \parallel ^2 + \mu_2 (\parallel{\bf p} \parallel ^2 -\parallel{\bf p}_1 \parallel ^2).
\]

Note that
\[ \parallel{\bf p}_1 \parallel ^2 = \frac{j^2}{r} (\sum_{i=1}^n q_i)^2, \mbox{ and }\]
\[ \parallel{\bf p} \parallel ^2 \le mj\Big{(}\sum_{i=1}^n q_i^2\Big{)}. \]
The first equality follows from the choice of ${\bf p}$ and the regularity of the parity check matrix $H$.
The second inequality follows by applying the identity $(q_1+q_2+\cdots+q_t)^2\le t(q_1^2+q_2^2+\cdots+q_t^2)$ to the
terms in the expansion of $\parallel {\bf p}\parallel^2$.

The above set of equations yield \[ \parallel H^T{\bf p} \parallel ^2 \le (\mu_1 - \mu_2)\frac{j^2}{r}\Big{(}\sum_{i = 1}^n
q_i\Big{)}^2 +
\mu_2 mj\Big{(}\sum_{i = 1}^n q_i^2\Big{)}.\]
We now lower bound $\parallel H^T{\bf p} \parallel ^2$ as follows
\[ \parallel H^T{\bf p} \parallel ^2 = \sum_{t = 1}^n \Big{(}\sum_{i = 1}^r \sum_{\ell = 1}^n h_{i,t} h_{i,\ell} q_{\ell}\Big{)}^2
\ge (4j^2)\Big{(}\sum_{t = 1}^n q_t^2\Big{)}.\]

This bound may be seen by observing that for each $t$ in the outer summation,
the inner sums over the indices $i$ and $\ell$ contribute $j$ $q_t$ terms
and
$(m-1)j$ terms involving other $q_k$'s.  When $t$ is fixed, for each $i$ wherein $h_{it}=1$,
we have $q_t$ and $(m-1)$ other $q_k$'s that contribute to the inner sum. Since $q_t$ and the $(m-1)$
other $q_k$'s are involved in the $i$th constraint node and since ${\bf q}$ is a pseudocodeword,
we have $q_t$+sum of $(m-1)$ other $q_k$'s $\ge 2q_t$. Since there are $j$ values of $i$ wherein
$h_{it}=1$, for a fixed $t$, the inner sum over $i$ and $\ell$ can be lower bounded by $2jq_t$.
Thus, $ \parallel H^T{\bf p} \parallel ^2 \ge \sum_{t = 1}^n
(2jq_t)^2=4j^2 (\sum_{t=1}^n q_t^2)$.

Combining the upper and lower bounds, we get
\[ \frac{(4j^2-\mu_2 mj)r}{(\mu_1 - \mu_2)j^2} \le \frac{(\sum_{i = 1}^n q_i)^2}{(\sum_{i = 1}^n q_i^2)} =
w^{AWGN}({\bf q}).\]
Since $nj = rm$, we obtain the desired lower bound.
\end{proof}

\begin{remark}
Note that this lower bound is not as strong as the bit-oriented bound in \cite{vo04}. It equals the bit-oriented bound for the case when
$m=2$. However, we believe that by a different but judicious choice of ${\bf p}$
in the above proof and by using stronger intermediate bounding steps, a
much stronger parity-oriented bound can be obtained.
\end{remark}

\section{Conclusions}
In this paper, the expander-based (i.e., eigenvalue-type) lower bounds on the minimum distance of expander codes were
extended to lower bound the minimum stopping set size and the minimum pseudocodeword weight of these codes. A new
parity-oriented lower bound in terms of the eigenvalues of the parity-check matrix was also obtained for the minimum
pseudocodeword weight of LDPC codes on the AWGN channel. These lower bounds indicate that LDPC codes constructed from
expander graphs provide a certain guaranteed level of performance and error-correction capability with graph-based iterative decoding
as well as linear programming decoding. Further, the results indicate that if the underlying LDPC constraint graph is a good expander, then the corresponding
expander code has a  minimum BSC pseudocodeword weight that is linearly growing in the block length. This is in general
a very hard criterion to ensure in the construction of good error correcting codes at large block lengths.
It would be interesting to derive upper bounds
on the distance, stopping set size, and pseudocodeword weight of expander codes to examine how tight the derived lower bounds are.

\section*{Acknowledgments} We thank Joachim Rosenthal and the reviewers for a careful proof-reading of this paper and their valuable comments. We believe their feedback has greatly improved
the paper. We also thank Reviewer 1 for providing the more intuitive definition of a stopping set in a generalized LDPC code.

\medskip
Received September 2006; revised July 2007.

\medskip
 {\it E-mail address: }ckelley@math.ohio-state.edu\\
 \indent{\it E-mail address: }deepak.sridhara@seagate.com\\

\end{document}